\documentclass[10pt,reqno]{amsart}
\usepackage{amssymb,mathrsfs,color}
\usepackage{pinlabel}

\usepackage{amsmath, amsfonts, amsthm, verbatim, amssymb}
\usepackage{epstopdf, mathtools}
\mathtoolsset{showonlyrefs}
\usepackage{color}
\usepackage{esint}
\usepackage{stmaryrd}
\usepackage{graphicx}

%-------------------Jacek Jendrej Preamble --------------------------------% 
\usepackage{empheq}
\usepackage[shortlabels]{enumitem}

\newcommand{\rar}{\rightarrow}

\newcommand{\cl}{\mathcal}

\def\grad {{\nabla}}

%\def\la {\langle}
%\def\ra {\rangle}
%\def \La {\bigg\langle}

%%%%%%%%%%%%%%%%%%%%%%%%%%%%%%%%%%%%COMMANDES

\newcommand{\bs}[1]{\boldsymbol{#1}}
\newcommand{\ms}[1]{\mathsf{#1}}
\newcommand{\msb}[1]{\boldsymbol{\mathsf{#1}}}

\renewcommand{\div}{\operatorname{div}}

%\newcommand{\dd}[1]{{\frac{\vd}{\vd{#1}}}}

%\renewcommand{\dot}[1]{\accentset{\mbox{\large .}}{#1}}

%%%%%%%%%%%%%%%%%%%%%%%%% AD-HOC MACROS

%%%%%%%%%%%%%%%%%%%%%%%%%%%%%%%%%%%%%%%%%%%%%%%%%%%%%%%%%%%%%%%%%%%%%%%%%%%%%%%%%%%%%%%%%%%%%%%%%%%%%%%%%

%------------------------------------------------
%AMS LaTeX Paper:  Andy's preamble
%------------------------------------------------

%Pictures
%\usepackage{wrapfig}
%\usepackage{tikz}
%\usetikzlibrary{arrows,calc,decorations.pathreplacing}
%\definecolor{light-gray1}{gray}{0.90}
%\definecolor{light-gray2}{gray}{0.80}

%\parindent=0pt
%\allowdisplaybreaks

%colors
\definecolor{deepgreen}{cmyk}{1,0,1,0.5}

%greenoff
%\renewcommand{\Green}[1]{#1}

%mathcal letters

%\newcommand{\calD}{\mathcal{D}}

%\mathscri letters

%mathbb letters

%bold lowercase

%bold uppercase

%greek lowercase

%greek uppercase

%misc symbols
\newcommand{\p}{\partial}

\makeatletter

\newcommand{\Rmnum}[1]{\expandafter\@slowromancap\romannumeral #1@}
\makeatother

%bracketing

 %math environment shortcuts

\newcommand{\Del}[1]{}

\usepackage[multiple]{footmisc}
\numberwithin{equation}{section}

\newtheorem{thm}{Theorem}[section]
\newtheorem{cor}[thm]{Corollary}

\newtheorem{prop}[thm]{Proposition}

\theoremstyle{remark}

\newtheorem{defn}[thm]{Definition}

%\renewcommand{\theenumi}{\roman{enumi}}

%text shortcuts

%\newcommand{\sign}{\operatorname{sign}}

%\usepackage{showkeys}

% ----------------------------------------------------------------
% AMS-LaTeX Paper - modified by S.-J. Oh
% **** -----------------------------------------------------------
%\usepackage{enumerate}
\usepackage{tensor}

\usepackage{graphicx}
\usepackage{xcolor} % A package to add color.
\usepackage{tensor}
\usepackage{slashed}
\usepackage{cite}
% COLORS ------------------------------------------------------------
\definecolor{green}{rgb}{0,0.8,0} % Redefines the color green.
%%%% Annotations %%%%
 % Defines the command "\texthl{<text>}" to be the text of its argument highlighted in yellow.
 % Defines the command "\displayhl{<displayed math>}" to be the displayed mathematics of its argument highlighted in yellow.
 % Defines the command "\scripthl{<superscript or subscript>}" to be the superscript or subscript of its argument highlighted in yellow.
%\newcommand{\comment}[1]{\begingroup\color{red} #1\endgroup} % Defines the command "\comment{}" to be its argument written in red font.

% THEOREMS -------------------------------------------------------
%\newtheorem{theorem}{Theorem}[section]
%\newtheorem{corollary}[theorem]{Corollary}
%\newtheorem{lemma}[theorem]{Lemma}
%\newtheorem{proposition}[theorem]{Proposition}
%\theoremstyle{definition}
%\newtheorem{definition}[theorem]{Definition}
%\newtheorem{example}[theorem]{Example}
%\theoremstyle{remark}
%\newtheorem{remark}[theorem]{Remark}
%\numberwithin{equation}{section}
%\numberwithin{equation}{subsection}

% MATH -----------------------------------------------------------

%\newcommand{\abs}[1]{\vert#1\vert}

%\newcommand{\dist}{\mathrm{dist}}

\newcommand{\tr}{\mathrm{tr}\,}

%Greek Characters

\newcommand{\eps}{\epsilon}

%Bold Characters

%Blackboard Bold Characters

\newcommand{\bbE}{\mathbb E}

\newcommand{\bbR}{\mathbb R}

%MathCal Characters

\newcommand{\tens}{\otimes}

\newcommand{\Sym}{\mathrm{Sym}}

\newcommand{\Div}{\mbox{Div}\,}

% ----------------------------------------------------------------

\vfuzz2pt % Don't report over-full v-boxes if over-edge is small
\hfuzz2pt % Don't report over-full h-boxes if over-edge is small

%-----------------------------------------------------------------

\begin{document}

\title[Nonlinear relations between stress and linearized strain]{A mathematical justification for nonlinear constitutive relations between stress and linearized strain}
\author{K. R. Rajagopal and C. Rodriguez}

\begin{abstract} 
We present an asymptotic framework that rigorously generates nonlinear constitutive relations between stress and linearized strain for elastic bodies. Each of these relations arises as the leading order relationship satisfied by a one-parameter family of nonlinear constitutive relations between stress and nonlinear strain. The asymptotic parameter limits the overall range of strains that satisfy the corresponding constitutive relation in the one-parameter family while the stresses can remain large (relative to a fixed stress scale). This differs from classical linearized elasticity where a fixed constitutive relation is assumed, and the magnitude of the displacement gradient serves as the asymptotic parameter. Also unlike classical approaches, the constitutive relations in our framework are expressed as implicit relationships between stress and strain rather than requiring stress explicitly expressed as a function of strain, adding conceptual simplicity and versatility. We demonstrate that our framework rigorously justifies nonlinear constitutive relations between stress and linearized strain including those with density-dependent Young's moduli or derived from strain energies beyond quadratic forms. 
\end{abstract}

\maketitle

\section{Introduction}

\subsection{Nonlinear constitutive relations between stress and linearized strain}
The celebrated eponymous constitutive relation to describe the response of elastic solids subject to ``small strains" remains a cornerstone of solid mechanics.\footnote{Referring to it as ``Hooke's law" is clearly unwarranted as it is merely a constitutive relation that holds for a small subset of bodies in an approximate sense while we expect a ``law" to have more universal validity. Unfortunately, most people that use the constitutive relation, including experts in the field, are unaware that Hooke did not limit the use of the relation to solids.} Hooke did not have a clear grasp of the applicability of the empirical constitutive relation he was proposing in his immortal treatise ``Lectures de potential restitutiva, or of Spring, Explaining the Power of Springy Bodies". He erroneously supposed that the constitutive relation was equally applicable to gases and thereby confused his constitutive relation with the one proposed by Boyle (see \cite{Moyer75} for a discussion of the relevant issues). Moyer \cite{Moyer75} remarks:
\begin{quotation}
\begin{comment}
Hooke also briefly mentions two additional experiments that demonstrate his `theory,' one involving `a straight piece of wood laid Horizontal,' and the other `a body of Air.' Elaborating on this latter pneumatic experiment and referring to his foregoing explicit summary of other basic `trials', Hooke comments:

``The manner of trying the same thing upon a body of Air, whether it be for the rarefaction or for the compression thereof I did about fourteen years since publish in my Micrographia, and therefore I shall not need to add any further description thereof." (p. 335).

Constantly urging the reader throughout these introductory pages to `observe exactly', `examine,' `measure', and `compare', Hooke finally suggests that `it is very evident' that the various experiments all lead to his `Rule or Law of Nature': ``$\ldots$in every springing body$\ldots$the force or power thereof to restore it self to its natural position is always proportionate to the Distance or space it is removed therefrom, whether it be by rarefaction, or separation of its parts the one from the other, or by a Condensation, or crowding of those parts nearer together." (p. 336)
\end{comment}
While it is a simple and straightforward matter to extract from the early pages of De potentia restitutiva the modern statement of `Hooke's law' for elastic solids, $F = -kx$, one must first suppress or ignore Hooke’s `trial' involving `a body of Air.' That is, in modern terms, he inappropriately associated `Hooke's law'-a direct proportionality between displacement and force-with `Boyle's law'-an inverse proportionality between the volume of an ideal gas at constant temperature and the pressure to which it is subjected (or, for a gas in a cylinder with a uniform cross-sectional area, between the length of the column of gas and the force on it).
\end{quotation}

In fact, Moyer \cite{Moyer75} points out that Hooke's theoretical explanation of his empirical relation $F = -kx$ is consistent with Boyle's relation for gases but is \textit{inconsistent} with Hooke's linear relation for solids. Indeed, assuming that a spring has natural length of 1 unit, Hooke proposed that the force due to the spring's particles' ``endeavor of receding from each other" is inversely proportional to its stretched length $d$ \cite{Moyer75}. As a consequence of balance of forces, Moyer \cite{Moyer75} observes:
\begin{quotation}
$\ldots$realizing that the stretched length $d$ is equal to the original length 1 unit plus some increment distance $x$, we obtain a relation between the force on a spring due to an external macroscopic agent and the spring's change in length: 
\begin{align}
	F_e = k (1 - [1/(1+x)]),
\end{align}
which is also equal to $k[x/(1+x)]\ldots$In other words, for slight changes in length, Hooke's model does lead approximately to his empirical law. 
\end{quotation}

In his authoritative article in the Handbuch der Physik, Bell \cite{MechanicsSolidsI} devotes a significant amount of discussion to the work by illustrious contemporaries of Hooke on nonlinear constitutive relations between stress and linearized strain.\footnote{A great deal of care has to be exercised in analyzing the discussion in Bell \cite{MechanicsSolidsI} as some of the works, while concerned with small strains, are suffering inelastic response. As no unloading data is provided, it is possible that during the reported experiments, some or many of the specimens might have undergone inelastic response.} According to Bell, many of them did not believe that Hooke’s empirical constitutive relation was appropriate, even for elastic solids undergoing small displacement gradients.\footnote{Of course, it depends on what one means by small displacement gradients. Here, it means the use of linearized strain wherein the nonlinear part of the strain is ignored and merely the symmetric part of the displacement gradient is used, or put differently, wherein the square of the norm of the displacement gradient is negligible compared to the norm of the displacement gradient. For example, it would be reasonable to say the displacement gradient is small if its Frobenius norm is less than $10^{-2}$. See \cite{Raj_GumMetal14} for more on this point.} Concerning the data that James Bernoulli had sent him, Leibniz observed that it seems to be fitted best by a hyperbola. Later, James Bernoulli himself proposed a parabolic relation, namely
\begin{align}
	t = kx^m,
\end{align}
where $t$ is the applied force and $x$ is the elongation. Bullfinger \cite{Bullfinger1729} proposed the value $m = 3/2$.

Wertheim \cite{Wertheim1847} carried out extensive and systematic experiments on a miscellany of biological materials. He carried out experiments on a variety of bones (femur, fibula), tendons (big toe flexor, small plantar), sartorial muscles, nerves, femoral arteries, veins, for both males and females. With the exception of bones, all his experiments could be fitted to the nonlinear constitutive relation
\begin{align}
	\eps^2 = A \sigma^2 + B \sigma,
\end{align}
where $\eps$ denotes the strain, $\sigma$ the stress, and $A$ and $B$ are constants. This led Bell \cite{MechanicsSolidsI} to remark:
\begin{quotation}
The experiments of 280 years have demonstrated amply for every solid substance examined with sufficient care, that the strain resulting from applied stress is not a linear function thereof.
\end{quotation}
The above observation of Bell remains true today with regard to a large class of solid bodies.

Numerous recent works study a variety of materials wherein one observes nonlinear response even when the body is subject to sufficiently small strains, regimes where the linearized constitutive relation due to Hooke seems reasonable. It is now well established that the response of the trabecular bone is nonlinear, even for small strains, and more importantly such bones respond differently in tension and compression (see, e.g., \cite{Morganetal2001} and \cite{RajetalBones2024}). It is interesting to note that the title of \cite{Morganetal2001} is \emph{Nonlinear behavior of trabecular bone at small strains}, emphasizing that even in the small strain range where the linearized constitutive relation due to Hooke is usually thought to be valid, the response is nonlinear.\footnote{If the strain is sufficiently small and one assumes that the stress as a function of strain is sufficiently smooth that one can use a Taylor series expansion, there will be a small interval containing the origin wherein the linear relationship will hold. However, this might be a absurdly small range of strains. The point being made here is that even in the range wherein one usually applies the linearized approximation, many materials exhibit nonlinear response.} There is considerable experimental evidence for a large class of metallic alloys wherein the response is nonlinear in the small strain regime. For the case of Gum metal and titanium alloys, see \cite{SAITOETAL03, SAKetal04, SAKAGUCHI2005, HAO05, LI2007, TALLING2008669, WITHEY200826, ZHANG2009733}. Once again, we quote a few remarks from the papers which emphasize the fact that the response is nonlinear even though the strain is small. Talling et al. \cite{TALLING2008669} remark: ``As the elastic regions of the microscopic stress–strain curves are nonlinear in Figure 3, the moduli were calculated from the lowest stress portion of the curve, before deviation from linearity.". The first sentence of the abstract of Zhang et al. \cite{ZHANG2009733} reads ``We report the fatigue endurance of a multifunctional b type titanium alloy exhibiting nonlinear elastic deformation behavior". Finally, Withey et al. \cite{WITHEY200826} demarcate the nonlinear response region which is in the small strain regime. Interestingly, even traditional materials such as concrete and rocks exhibit nonlinear behavior in the small strain regime. Grasley et al. \cite{Grasleyetal2015} show that concrete subject to simple uniaxial compression test starts to respond nonlinearly even at a very low strain of $2 \times 10^{-3}$, where one would unhesitatingly apply the classical linearized constitutive relation.

Thus, it is necessary to put into place a framework wherein the linearized strain is a nonlinear function of the stress, or wherein the linearized strain and the stress are implicitly related as an approximation. Moreover, such an approximation should stem rigorously from a proper constitutive relation that meets the basic requirements such as frame-indifference and appropriate material symmetry.

\subsection{Implicit constitutive theory}

We now discuss modern advances in constitutive theory that inspired this study's proposed asymptotic foundation for rigorously generating nonlinear constitutive relations between stress and linearized strain. 

Cauchy's classical theory of elasticity is founded upon a functional relation between the Cauchy stress tensor $\bs T$ and {deformation gradient $\bs F$} for a given deformation,\footnote{Here, $\bs F = \grad_{\bs X} \bs \chi$ is the gradient of a smooth, invertible deformation $\bs \chi: \cl B \rar \bs \chi(\cl B)\subseteq \bbE^3$ of a body with reference configuration $\cl B$ in three-dimensional Euclidean space. See Section 2 for more on the definitions of the tensors appearing in this discussion. Throughout this work, we use standard tensor notation. In particular, if $\bs A$ is a tensor, we denote the Frobenius norm $|\bs A| = [\tr(\bs A \bs A^T)]^{1/2}$.}
\begin{align}
\bs T = \bs g(\bs F). \label{eq:Cauchyelasticity}
\end{align}
Here and throughout our study, we omit the dependence of various quantities on current points in the body, that is, we assume the body to be homogeneous.
Assuming frame indifference, one alternatively prescribes a functional relationship
between the symmetric Piola-Kirchhoff stress tensor $\bar{\bs S}$ and the Green-Saint Venant strain tensor $\bs E = \frac{1}{2}(\bs F^T \bs F - \bs I)$,
\begin{align}
	\bar{\bs S} = \bs f(\bs E). \label{eq:Cauchyelasticityref}
\end{align}

Classic linearized elasticity is obtained from \eqref{eq:Cauchyelasticityref} or \eqref{eq:Cauchyelasticity} in the asymptotic limit of infinitesimal strains. More precisely, if $|\bs F - \bs I| = \delta_0 \ll 1$, $\bs f(\bs 0) = \bs 0$, and $\bs f$ is twice continuously differentiable, then \eqref{eq:Cauchyelasticityref} implies that
\begin{align}
	\bs \sigma = \msb C[\bs \eps] + \bs O(\delta_0^2), \label{eq:linearizedelasticity}
\end{align}
where $\msb C = D_{\bs E}\bs f(\bs 0)$ is the elasticity tensor, $\bs \eps = \frac{1}{2}(\bs F + \bs F^T) - \bs I$ is the linearized strain, and $\bs \sigma$ has replaced $\bar{\bs S}$ as the stress variable.\footnote{We use standard big-oh and little-oh notation. In particular, a tensor $\bs A$ and scalar $B$ satisfy $\bs A = \bs O(B)$ if there exists a constant $C > 0$ such that $|\bs A| \leq C B$. If $C$ depends on parameters $a$ and $b$, we say that the big-oh term $\bs O(B)$ depends on $a$ and $b$.}

In a series of papers \cite{Raj_Implicit03, RajElastElast, Raj2010, RajConspectus}, the first author put into place an implicit framework for the response of elastic bodies, 
\begin{align}
	\bs g(\bs F, \bs T) = \bs 0, \label{eq:rajimplicit}
\end{align}  
with a simpler subclass for isotropic solids taking the form 
\begin{align}
	\bs B = \bs g (\rho/\rho_R, \bs T ) \label{eq:henckycauchy},
\end{align}
where $\bs B = \bs F \bs F^T$ is the left Cauchy-Green stretch tensor. 
Inspired by \cite{Raj_Implicit03, RajElastElast, Raj2010, RajConspectus}, Mai and Walton \cite{MAIWALTONELLIP, MAIWALTONMON} studied constitutive relations of the form 
\begin{align}
	\bs E = \bs f(\bar{\bs S}). \label{eq:maiwalton}
\end{align} 
They proved in \cite{MAIWALTONELLIP} that for a popular class of isotropic forms of $\bs f$ appearing in the literature, strong ellipticity for \eqref{eq:maiwalton} holds as long as $|\bar{\bs S}|$ is sufficiently small and fails for extreme compression. Results of a similar spirit were obtained in \cite{MAIWALTONMON} for a form of monotonicity that is strictly weaker than strong ellipticity (and thus applies when $\bs f$ is not Frech\'et differentiable). It was shown in \cite{PRUSA2020} that certain relations of the form \eqref{eq:henckycauchy} arise when describing the response of isotropic elastic solids specified in terms of Gibbs potentials. For simple choices of the Gibbs potential (see Section 5 of \cite{PRUSA2020}), \eqref{eq:henckycauchy} becomes
\begin{align}
	\bs H = \beta_0(\rho/\rho_R) (\tr \bs T) \bs I + \beta_1(\rho/\rho_R) \bs T, \label{eq:b2zero}
\end{align}
where $\bs H = \frac{1}{2} \log \bs B$ is the \textit{Hencky strain}. Based on the assumption that the Frobenius norm of the displacement gradient is small, the works \cite{Raj_Implicit03, RajElastElast, Raj2010, RajConspectus} formally argued that if an implicit relation $\bs g(\bs B, \bs T) = \bs 0$ holds, then $\bs g(\bs I + 2 \bs \eps, \bs T) = \bs 0$, up to negligible errors of size $|\bs F - \bs I|^2$.\footnote{Similar arguments in \cite{Raj_Implicit03, RajElastElast, Raj2010, RajConspectus} suggest that if the implicit relation $\bs f(\bs E, \bar{\bs S}) = \bs 0$ is satisfied, then $\bs f(\bs \eps, \bar{\bs S}) = \bs 0$, up to negligible errors of size $|\bs F - \bs I|^2$. } 

However, one may show using the implicit function theorem that if $\bs g$ is continuously differentiable, $\bs g(\bs I, \bs 0) = \bs 0$, and $D_{\bs T}\bs g(\bs I, \bs 0)$ is invertible on the linear space of symmetric tensors, then there exists a function $\bs h$ such that \eqref{eq:rajimplicit} is satisfied if and only if $\bs T = \bs h(\bs B)$, provided $|\bs F - \bs I|$ is sufficiently small. A similar statement holds for implicit relations of the form
\begin{align}
	\bs f(\bs E, \bar{\bs S}) = \bs 0. \label{eq:rajimplicit2}
\end{align}
Thus, under the previous assumptions, fixed constitutive relations of the form $\bs g(\bs B, \bs T) = \bs 0$ or \eqref{eq:rajimplicit2} will always lead to \eqref{eq:linearizedelasticity} as the leading order relation satisfied in the asymptotic limit, $\delta_0 = |\bs F - \bs I| \rar 0$. 

Instead, our approach is inspired by the theory discussed in \cite{RajElastElast, RajSmallStrain} wherein strains are \emph{limited by the constitutive relation} to a range of size $\delta \ll 1$ (while the stresses can remain \textit{large} relative to a fixed stress scale). Rather than $\delta_0$, it is the \emph{limiting strain} determined by the constitutive relation, $\delta$, that serves as the asymptotic parameter in our framework. The central thesis demonstrated in this work, is that ``linearization" with respect to $\delta$ rigorously leads to nonlinear relations between stress and linearized strain $\bs \eps$ while, as discussed above, ``linearization" with respect to $\delta_0 = |\bs F - \bs I|$ only leads to linear relations between stress and the linearized strain.\footnote{Here, our use of ``linearization" with respect to a small parameter $a$ means to neglect terms of order $a^2$ appearing in either the strain variable or constitutive relation.}

We illustrate the above discussion and motivate our framework via a simple $1d$ example. Consider the relations 
\begin{gather}
	E = \delta a (1 + |a \bar S|^p)^{-1/p}\bar S, \quad \bar S \in \mathbb R, \label{eq:1dconst} \\
	\eps := -1 + (1 + 2 E)^{1/2}, \quad \delta_0 := |\eps|, \label{eq:1dstrain}
\end{gather} 
where $\eps$ is the linearized strain variable, $\bar S$ is the stress variable, and $$E = \eps + \frac{1}{2} \eps^2$$ is the nonlinear strain. Here, we assume that $\delta, \delta_0 \ll 1$. By \eqref{eq:1dconst}, for all $\bar S \in \mathbb R$, $|E| \leq \delta$. By \eqref{eq:1dstrain}, $\eps = E + O(\delta^2)$, and \eqref{eq:1dconst} yields 
\begin{align}
	\eps + O(\delta^2) = \delta a (1 + |a \bar S|^p)^{-1/p}\bar S. 
\end{align}
Thus, linearization with respect to $\delta$ yields the nonlinear relation between stress and linearized strain, 
\begin{align}
		\eps = \delta a (1 + |a \bar S|^p)^{-1/p}\bar S. \label{eq:1dconstlin}
\end{align}
On the other hand, we may invert \eqref{eq:1dconst}, leading to 
\begin{align}
	a \bar S = (1 -|E/\delta|^p)^{-1/p} E/\delta. \label{eq:1dinvert}
\end{align}
Since $\eps = E + O(\delta_0^2)$, we conclude that 
\begin{align}
a \bar S = \eps/\delta + O(\delta_0^2/\delta^2). \label{eq:delta0approximation}
\end{align}
Thus, linearization with respect to $\delta_0$ yields the linear relation  
\begin{align}
 a \bar S = \eps/\delta. 
\end{align}
This interplay between inversion and linearization of constitutive relations was first explored in \cite{RAJAGOPAL2018}. We point out that it was quite simple to rigorously arrive at \eqref{eq:1dconstlin} directly from \eqref{eq:1dconst} when linearizing with respect to $\delta$. However, it is much less clear how one arrives at $a \bar S = (1-|\eps/\delta|^p)^{-1/p} \eps/\delta$ directly from \eqref{eq:1dinvert} when linearizing with respect to $\delta$. 

\subsection{Main results and outline}
In this work, we consider implicit constitutive relations,
\begin{align}
	\bs E = \bs f_\delta(\bs E, \bar{\bs S}), \label{eq:impcontrelation}
\end{align}
with $\bs f_\delta$ bounded by $\delta \ll 1$ on its domain. The parameter $\delta$ should be physically interpreted as the maximum possible strain the material can undergoe (in the class of processes being considered). In Section 2 we briefly review the kinematics and balance laws necessary for this study to be self-contained. 

In this study, we view each $\bs f_\delta$ as a member of one-parameter family of \emph{strain-limiting functions with limiting small strains} introduced in Section 3. See Definition \ref{dfn:1}. A one-parameter family of strain-limiting functions should be physically interpreted as representing a collection of constitutive relations, with each member modeling a (possibly) distinct material within a broader family of materials, and $\delta$ giving the maximum possible strain the material can undergo (in the class of processes being considered). The defining features of our asymptotic framework are:
\begin{itemize}
	\item We approximate a family of strain-limiting constitutive relations with limiting small strains. 
	\item The limiting strain $\delta$ for each individual constitutive relation in the family acts as the framework's asymptotic parameter characterizing the degree of the approximation, e.g., the leading order approximate.
\end{itemize}

We discuss several examples in Section 3.2, and we show in Section 3.3 that within our framework, a nonlinear constitutive relation between stress and linearized strain is the leading-order-in-$\delta$ relationship satisfied by the family of strain-limiting constitutive relations represented by \eqref{eq:impcontrelation}; informally,
\begin{align}
	\bs \eps = \bs f_\delta(\bs \eps, \bs \sigma) + \bs O(\delta^2).
\end{align}
See Proposition \ref{p:2} and Corollary \ref{c:1} for the precise statements. Our results are clearly distinct but in similar spirit to the situation in classical linearized elasticity \eqref{eq:linearizedelasticity}. As applications of Proposition \ref{p:2} and Corollary \ref{c:1}, we first obtain nonlinear constitutive relations between stress and linearized strain wherein the stress is derived from a strain energy depending on the linearized strain; see \eqref{eq:Greenelasticlin}. In contrast to classical linearized elasticity, however, our framework allows for strain energies beyond quadratic forms of the linearized strain. In addition, we also obtain certain popular nonlinear constitutive relations between stress and linearized strain having density dependent Young's moduli; see \eqref{eq:lindensitydependent1}, \eqref{eq:lindensitydependent2}, \eqref{eq:lindensityconst1}, and \eqref{eq:lindensityconst2}. We emphasize that the nonlinear constitutive relations between stress and linearized strain obtained correspond to the leading order relations satisfied by appropriate families of strain-limiting constitutive relations. 

In the concluding Section 4, we speculate on the interesting open question concerning the solvability of the fully nonlinear problem if one can prove the solvability of the associated ``linearized problem", the body force being kept the same.

\subsection*{Acknowledgments} The authors gratefully acknowledge support provided by NSF Grant DMS-2307562.

\section{Preliminaries}

In this brief section, we give necessary definitions for the ensuing discussion and analysis presented in this work. 

Let $\cl B \subseteq \bbE^3$ be a smooth domain in three-dimensional Euclidean space, the reference configuration of a body with reference mass density $\rho_R$. Let $\bs \chi: \cl B \rar \bs \chi(\cl B) \subseteq \bbE^3$ be a smooth invertible deformation of $\cl B$. For a given reference element $\bs X \in \cl B$, we denote by $\bs x = \bs \chi(\bs X)$ its position in the current configuration $\bs \chi(\cl B)$. The \textit{deformation gradient} of $\bs \chi$ is defined by  
$
\bs F(\bs X) = \frac{\p \bs \chi}{\p \bs X}(\bs X)$.
%Conservation of mass dictates that the mass density on the current configuration is given by 
%\begin{align}
%	\rho = (\det \bs F)^{-1} \rho_R
%\end{align}

The left and right \textit{Cauchy-Green stretch} tensor fields are given respectively by
\begin{align}
	\bs B = \bs F \bs F^{T}, \quad \bs C = \bs F^T \bs F,
\end{align}
and the \emph{Green-Saint Venant strain} is the second order tensor
\begin{align}
	\bs E = \frac{1}{2} (\bs C - \bs I).%, \quad \bs H = \frac{1}{2} \log \bs B. 
\end{align}
%and from the fact that $\det e^{\bs H} = e^{\tr \bs H}$, we have $e^{-\tr \bs H} = \rho/\rho_R$. 
The \textit{displacement field} is defined by 
\begin{align}
	\bs u = \bs x - \bs X,
\end{align}
and the \textit{displacement gradient} is given by 
\begin{align}
	\nabla_{\bs X} \bs u = \bs F - \bs I. %, \quad \nabla_{\bs x} \bs u = \bs I - \bs F^{-1}.
\end{align}
Then 
\begin{gather}
	\bs B = I + \bigl [ \nabla_{\bs X} \bs u + (\nabla_{\bs X} \bs u)^T \bigr ] + \nabla_{\bs X} \bs u (\nabla_{\bs X} \bs u)^T, \\
		\bs C = I + \bigl [ \nabla_{\bs X} \bs u + (\nabla_{\bs X} \bs u)^T \bigr ] + (\nabla_{\bs X} \bs u)^T \nabla_X \bs u.
\end{gather}

Assuming that 
\begin{align}
 |\nabla_{\bs X} \bs u| = \delta_0 \ll 1, \label{eq:infinitesemaldisplacement}
\end{align}
we then have that
\begin{align}
	\bs B = \bs I + 2\bs \eps + \bs O(\delta_0^2), \quad \bs E = \bs \eps + \bs O(\delta_0^2), \label{eq:linstrains}%, \quad \bs H = \bs \eps + \bs O(\delta_0^2), \label{eq:linstrains}
\end{align}
where $\bs \eps$ is the \emph{linearized strain}
\begin{align}
	\bs \eps = \frac{1}{2} \bigl [ \nabla_{\bs X} \bs u + (\nabla_{\bs X} \bs u)^T \bigr ] = \frac{1}{2} (\bs F + \bs F^T) - \bs I.
\end{align}

In the purely mechanical setting and in the absence of external forces and body couples, the classical equations expressing conservation of mass, balance of linear momentum, and balance of angular momentum on the current configuration are given respectively by: 
\begin{gather}
	\dot \rho + \rho \div \bs v = \bs 0, \\
	\rho \dot{\bs v} = \div \bs T, \quad
	\bs T^T = \bs T. 
\end{gather}
Here, $\dot{} = \p_t + \sum_{j = 1}^3 v_j \p_{x_j}$ is the material time derivative, $\rho$ is the current mass density, $\bs v$ is the velocity field, and $\bs T$ is the (symmetric) \textit{Cauchy stress} tensor. On the reference configuration, the balance laws can be expressed by: 
\begin{gather}
	\rho = (\det \bs F)^{-1} \rho_R, \\
	\rho_R \p_t^2 \bs \chi = \mbox{Div}\, (\bs S), \quad \bs S \bs F^T = \bs F \bs S^T,
\end{gather}
where $\bs S$ is the first \textit{Piola-Kirchhoff stress} tensor. We denote the second \textit{symmetric Piola-Kirchhoff stress tensor} by $\bar{\bs S} = \bs F^{-1} \bs S$. We note that if $|\nabla_{\bs X} \bs u| = \delta_0 \ll 1$, then 
\begin{align}
	\rho/\rho_R = (\det \bs B)^{-1/2} = (\det \bs C)^{-1/2} = [\det ( \bs I + 2 \bs E)]^{-1/2} = 1 - \tr \bs \eps + O(\delta_0^2). \quad \label{eq:lindensity}
	%	1 - \rho/\rho_R = 
	%	1 - e^{-\tr \bs H} = \tr \bs \eps + O(\delta_0^2). \label{eq:lindensity}
\end{align}

\section{An asymptotic framework for nonlinear relations between stress and linearized strain}

In this section, we introduce a novel asymptotic framework for constitutive relations. Within this framework, a nonlinear constitutive relation between stress and linearized strain is the leading-order-in-$\delta$ relationship satisfied by a one-parameter family of strain-limiting constitutive relations $\bs E = \bs f_\delta(\bs E, \bar{\bs S})$. These one-parameter families are given in terms of families of strain-limiting functions with limiting small strains, $\bs f_\delta$, defined in the following subsection.

\subsection{Families of strain-limiting functions with limiting small strains}

\begin{defn}\label{dfn:1}
For small $\tilde{\delta} > 0$, and each $\delta \in (0, \tilde \delta)$, let $U_\delta \subseteq B(\bs 0, 1/2)$ and $V$ be open subsets of $\Sym$.\footnote{Here, $\Sym$ denotes the set of symmetric tensors on $\bbR^3$, and $B(\bs 0, r) = \bigl \{ \bs E \in \Sym \mid |\bs E| < r \bigr \}$. We note that if $\bs E \in B(\bs 0, 1/2)$, then it straightforward to see that $\bs C = \bs I + 2 \bs E$ is positive definite.} We say that a collection of bounded Lipschitz continuous functions $\bs f_\delta : U_\delta \times V \rar \Sym$ indexed by $\delta \in (0, \tilde \delta)$ is a family of \textit{strain-limiting functions with limiting small strains} if there exist $C_0, C_1 > 0$ and $D_0 > 0$, independent of $\delta$, such that for all $\delta \in (0,\tilde \delta)$,
\begin{gather}
	\forall \bs E, \bar{\bs S}, \,
	|\bs f_\delta(\bs E, \bar{\bs S})| \leq C_0 \delta, \quad \forall \bs E_1 \not = \bs E_2, \bar{\bs S}, \, \frac{|\bs f_\delta(\bs E_2, \bar{\bs S}) - \bs f_\delta(\bs E_1, \bar{\bs S})|}{|\bs E_2 - \bs E_1|} \leq C_1, \,\,\,\, \label{eq:strainlimiting1}\\ 
	 \forall \bs E, \bar{\bs S}_1 \not = \bar{\bs S}_2, \, \frac{| \bs f_\delta(\bs E, \bar{\bs S}_2) - \bs f_\delta(\bs E, \bar{\bs S}_1)|}{|\bar{\bs S}_2 - \bar{\bs S}_1|} \leq D_0 \delta. \label{eq:strainlimiting2}
\end{gather}
\end{defn}
If $\bs E$ represents a strain variable, $\bar{\bs S}$ represents a stress variable, and $\bs f_\delta$ represents a dimensionless response function, then $\delta, C_0,$ and $C_1$ are dimensionless and $D_0$ has physical units of $(\mbox{length})^2(\mbox{force})^{-1}$. 
Each individual function $\bs f_\delta$ is referred to as a \textit{strain-limiting function with limiting small strains} on $U_\delta \times V$.

We note that $\bs f_\delta$ is a family of strain-limiting functions with limiting small strains if and only if for any fixed $C > 0$, $\bs f_{C^{-1}\delta}$ is a family of strain-limiting functions with limiting small strains. In particular, we may always assume without loss of generality that $C_0 = 1$. However, it will be mathematically convenient to allow $C_0 \neq 1$ when discussing examples in the sequel. 

\begin{comment}
We observe that if $\bs f_\delta : U_\delta \times V \rar \Sym$ is a family of \textit{strain-limiting functions with limiting small strains}, then so is 
\begin{align}
	\tilde{\bs f}_\delta(\bs E, \tilde{\bs S}) = \bs f_{C_0\delta} (\bs E, C_0 D_0^{-1} \tilde{\bs S}), 
\end{align} 
with $(\bs E, \tilde{\bs S}) \in U_\delta \times \tilde V$ and $\tilde V = C_0^{-1} D_0 V = 
\{ C_0^{-1} D_0 \bar{\bs S} \mid \bs S \in \Sym \}$. Moreover, for all $(\bs E, \tilde{\bs S}), (\bs E_1, \tilde{\bs S}_1), (\bs E_2, \tilde{\bs S}_2) \in U_\delta \times \tilde V$, 
\begin{gather}
	\sup_{\substack{\bs E_1 \not = \bs E_2, \tilde{\bs S}, \delta}} \frac{|\tilde{\bs f}_\delta(\bs E_2, \tilde{\bs S}) - \tilde{\bs f}_\delta(\bs E_1, \tilde{\bs S})|}{|\bs E_2 - \bs E_1|} + \sup_{\bs E, \tilde{\bs S}, \delta} \delta^{-1} |\tilde{\bs f}_\delta(\bs E, \tilde{\bs S})| = 1, \label{eq:normstrainlimiting1}\\ 
    \sup_{\bs E, \tilde{\bs S}_1 \not = \tilde{\bs S}_2, \delta} \delta^{-1} \frac{| \tilde{\bs f}_\delta(\bs E, \tilde{\bs S}_2) - \tilde{\bs f}_\delta(\bs E_1, \tilde{\bs S}_1)|}{|\bs S_2 - \bs S_1|} = 1. \label{eq:normstrainlimiting2}
\end{gather}
We say that a family of strain-limiting functions with limiting small strains satisfying \eqref{eq:normstrainlimiting1} and \eqref{eq:normstrainlimiting2} is \emph{normalized}. In this case, all variables representing physical quantities are dimensionless.   
\end{comment}

\subsection{Examples of families of strain-limiting functions}

We now discuss a series of example families of strain-limiting functions with limiting small strains. Let $\bs f: U \times V \rar \Sym$ be a bounded Lipschitz continuous function satisfying,
\begin{gather}
	\forall \bs E, \bar{\bs S}, \,
	|\bs f(\bs E, \bar{\bs S})| \leq \delta_1, \quad \forall \bs E_1 \not = \bs E_2, \bar{\bs S}, \, \frac{|\bs f(\bs E_2, \bar{\bs S}) - \bs f(\bs E_1, \bar{\bs S})|}{|\bs E_2 - \bs E_1|} \leq \tilde C_1, \,\,\,\, \label{eq:fcond1}  \\ 
	\forall \bs E, \bar{\bs S}_1 \not = \bar{\bs S}_2, \quad \frac{| \bs f(\bs E, \bar{\bs S}_2) - \bs f(\bs E, \bar{\bs S}_1)|}{|\bar{\bs S}_2 - \bar{\bs S}_1|} \leq \tilde D_0 \delta_1. \label{eq:fcond2}
\end{gather}
Then
\begin{align}
	\bs f_\delta(\bs E, \bar{\bs S}) := \frac{\delta}{\delta_1} \bs f \Bigl (
	\frac{\delta_1}{\delta} \bs E, \bar{\bs S}
	\Bigr ), \label{eq:generatedfamily}
\end{align}
for $\bs E \in U_\delta := \frac{\delta}{\delta_1}U = \bigl \{ \frac{\delta}{\delta_1} \tilde{\bs E} \mid \tilde{\bs E} \in U \bigr \}$ and $\bar{\bs S} \in V$, defines a family of strain-limiting functions with limiting small strains satisfying $\bs f_{\delta_1} = \bs f$. Moreover, the constants appearing in \eqref{eq:strainlimiting1} and \eqref{eq:strainlimiting2} satisfy
\begin{align}
	C_0 = 1, \quad C_1 = \tilde C_1, \quad D_0 = \tilde D_0. 
\end{align}
Roughly speaking, this example shows that an arbitrary bounded Lipschitz continuous function generates a family of strain-limiting functions with limiting small strains in a natural way. 

We now show that if $\delta_n \rar 0$, then up to passing to a subsequence, a family of strain-limiting functions with limiting small strains and $U_\delta = \delta U$, where $U$ is a fixed open subset in $\Sym$, takes the form \eqref{eq:generatedfamily} to leading order in $\delta_n$. This class of strain-limiting functions with limiting small strains satisfying the additional condition $U_\delta = \delta U$ is extensive and contains all explicit examples discussed in this work. 

\begin{prop}
Let $U, V \subseteq \Sym$ be bounded open sets. For each $\delta > 0$, let
\begin{align}
	U_\delta = \delta U = \{ \delta \tilde{\bs E} \mid \tilde{\bs E} \in U \}. 
\end{align}
Suppose that $\{ \delta_n \}_n$ is a sequence of positive numbers converging to $0$, and $\bs f_\delta : U_\delta \times V \rar \Sym$ is a family of strain-limiting functions with limiting small strains. Then there exists a subsequence $\{\delta_{n_m}\}_m$ and a bounded Lipschitz continuous function $\bs g: U \times V \rar \Sym$ such that 
\begin{align}
	\sup_{(\bs E, \bar{\bs S}) \in U_{\delta_{n_m}} \times V}
	|\bs f_{\delta_{n_m}}(\bs E, \bar{\bs S}) - \delta^{-1}_{n_m}\bs g(\delta^{-1}_{n_m} \bs E, \bar{\bs S})| = o({\delta_{n_m}}),
\end{align}
as $m \rar \infty$. 
\end{prop}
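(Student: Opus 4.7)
The plan is to rescale the family $\bs f_\delta$ so that the strain variable takes values in the fixed set $U$ (rather than the shrinking set $U_\delta = \delta U$), and then extract a uniformly convergent subsequence via Arzel\`a--Ascoli. Specifically, for each $\delta \in (0, \tilde \delta)$, define
\begin{align}
\tilde{\bs f}_\delta(\tilde{\bs E}, \bar{\bs S}) := \delta^{-1} \bs f_\delta(\delta \tilde{\bs E}, \bar{\bs S}), \qquad (\tilde{\bs E}, \bar{\bs S}) \in U \times V.
\end{align}
The conclusion of the proposition (interpreting the right-hand side of the leading-order comparison as $\delta_{n_m} \bs g(\delta^{-1}_{n_m} \bs E, \bar{\bs S})$) is equivalent to the statement that $\tilde{\bs f}_{\delta_{n_m}} \to \bs g$ uniformly on $U \times V$ along a subsequence, after the substitution $\bs E = \delta_{n_m} \tilde{\bs E}$.

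Next I would verify the two hypotheses of Arzel\`a--Ascoli. The bound \eqref{eq:strainlimiting1} gives $|\tilde{\bs f}_\delta(\tilde{\bs E}, \bar{\bs S})| \leq \delta^{-1}(C_0 \delta) = C_0$, so the family is uniformly bounded on $U \times V$ independently of $\delta$. For equicontinuity, the Lipschitz estimates in \eqref{eq:strainlimiting1} and \eqref{eq:strainlimiting2} yield
\begin{align}
|\tilde{\bs f}_\delta(\tilde{\bs E}_2, \bar{\bs S}) - \tilde{\bs f}_\delta(\tilde{\bs E}_1, \bar{\bs S})| \leq \delta^{-1} C_1 |\delta \tilde{\bs E}_2 - \delta \tilde{\bs E}_1| = C_1 |\tilde{\bs E}_2 - \tilde{\bs E}_1|,
\end{align}
and similarly $|\tilde{\bs f}_\delta(\tilde{\bs E}, \bar{\bs S}_2) - \tilde{\bs f}_\delta(\tilde{\bs E}, \bar{\bs S}_1)| \leq D_0 |\bar{\bs S}_2 - \bar{\bs S}_1|$. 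Hence the rescaled family is uniformly Lipschitz, with constants depending only on $C_1$ and $D_0$, and so in particular uniformly equicontinuous.

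Because $U$ and $V$ are bounded subsets of the finite-dimensional space $\Sym$, their closures $\bar U$ and $\bar V$ are compact. The uniform Lipschitz estimates let me extend each $\tilde{\bs f}_\delta$ uniquely to a Lipschitz function on $\bar U \times \bar V$ with the same constants, preserving all uniform bounds. Applying the Arzel\`a--Ascoli theorem on the compact metric space $\bar U \times \bar V$, I extract a subsequence $\{\delta_{n_m}\}_m$ and a continuous (in fact Lipschitz, by a standard passage-to-the-limit argument on the difference quotients) function $\bs g : \bar U \times \bar V \to \Sym$ such that $\tilde{\bs f}_{\delta_{n_m}} \to \bs g$ uniformly on $\bar U \times \bar V$. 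Restricting to $U \times V$, multiplying through by $\delta_{n_m}$, and substituting $\bs E = \delta_{n_m} \tilde{\bs E}$ converts this uniform convergence into the desired $o(\delta_{n_m})$ estimate.

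The core of the argument is completely routine once the correct rescaling is in place, so I do not anticipate a genuine obstacle; the only delicate bookkeeping is confirming that the Lipschitz constants for $\tilde{\bs f}_\delta$ are $\delta$-independent (which is exactly why the definition of \emph{strain-limiting with limiting small strains} bundles the factor of $\delta$ into the $\bar{\bs S}$-Lipschitz estimate \eqref{eq:strainlimiting2}) and that the Lipschitz regularity passes to the uniform limit $\bs g$. Both are standard.
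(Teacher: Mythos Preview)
Your proposal is correct and follows essentially the same route as the paper: define the rescaled functions $\delta_n^{-1}\bs f_{\delta_n}(\delta_n\,\cdot\,,\,\cdot\,)$ on $U\times V$, read off the uniform $C_0$, $C_1$, $D_0$ bounds from \eqref{eq:strainlimiting1}--\eqref{eq:strainlimiting2}, and apply Arzel\`a--Ascoli. You are in fact slightly more careful than the paper in extending to the compact closure $\bar U\times\bar V$ before invoking Arzel\`a--Ascoli, and you correctly flag the typo in the displayed conclusion (the factor should be $\delta_{n_m}$, not $\delta_{n_m}^{-1}$, consistent with \eqref{eq:addassumpt}).
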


\begin{proof}
Define $\bs h_n : U \times V \rar \Sym$ by 
\begin{align}
	\bs h_n(\tilde{\bs E}, \bar{\bs S}) = \delta_n^{-1} \bs f_{\delta_n}(\delta_n \tilde{\bs E}, \bar{\bs S}), \quad (\tilde{\bs E}, \bar{\bs S}) \in U \times V.
\end{align}
Then by \eqref{eq:strainlimiting1} and \eqref{eq:strainlimiting2}, for all $n$, 
\begin{gather}
	\forall \tilde{\bs E}, \bar{\bs S}, \quad
	|\bs h_n(\tilde{\bs E}, \bar{\bs S})| \leq C_0, \qquad \forall \tilde{\bs E}_1 \not = \tilde{\bs E}_2, \bar{\bs S}, \quad \frac{|\bs h_n(\tilde{\bs E}_2, \bar{\bs S}) - \bs h_n(\tilde{\bs E}_1, \bar{\bs S})|}{|\tilde{\bs E}_2 - \tilde{\bs E}_1|} \leq C_1,  \\ 
	\forall \tilde{\bs E}, \bar{\bs S}_1 \not = \bar{\bs S}_2, \quad \frac{| \bs h_n(\tilde{\bs E}, \bar{\bs S}_2) - \bs h_n(\tilde{\bs E}, \bar{\bs S}_1)|}{|\bar{\bs S}_2 - \bar{\bs S}_1|} \leq D_0.
\end{gather}
By the Arzela-Ascoli theorem, there exists a subsequence $\{ \bs h_{n_m} \}_m$ and $\bs g : U \times V \rar \Sym$ such that 
\begin{align}
	\sup_{(\tilde{\bs E}, \bar{\bs S}) \in U \times V} |\bs h_{n_m}(\tilde{\bs E}, \bar{\bs S}) - \bs g(\tilde{\bs E}, \bar{\bs S})| = o(1), 
\end{align}
as $m \rar \infty$, concluding the proof. 
\end{proof}

A simple subclass of the example \eqref{eq:generatedfamily} is generated by a Lipschitz continuous function $\bs f_1 : V \rar \Sym$ on an open set $V \subseteq \Sym$
satisfying 
\begin{align}
	\forall \bar{\bs S}, \quad |\bs f_1(\bar{\bs S})| \leq 1. 
\end{align}
Then 
\begin{align}
\bs f_\delta(\bar{\bs S}) = \delta \bs f_1(\bar{\bs S}), \label{eq:simplestrainlimiting}
\end{align}
is a family of strain-limiting functions on $V$ with limiting small strains. If there exists a scalar function $W^* : V \rar \bbR$ such that $\bs f_1 = \p_{\bar{\bs S}} W^*$, then the relation
\begin{align}
	\bs E = \delta \bs f_1(\bar{\bs S}) = \p_{\bar{\bs S}}[\delta W^*(\bar{\bs S})], \quad \bar{\bs S} \in V, \label{eq:simplehyper}
\end{align} 
models an elastic solid body with complementary energy $\delta W^*$. If, in addition, $V$ is convex and $W^*$ is a twice continuously differentiable, convex function on $V$, then the relation \eqref{eq:simplehyper} can be inverted yielding 
\begin{align}
	\bar{\bs S} = \p_{\bs E}[\delta W \bigl (\delta^{-1}\bs E\bigr )], \quad \bs E \in \mbox{range}(\delta \bs f_1). \label{eq:Greenelastic}
\end{align} 
In \eqref{eq:Greenelastic}, $W$ is the Legendre transform of $W^*$, and \eqref{eq:Greenelastic} is the classical constitutive relation for a Green elastic solid with strain energy $\delta W(\delta^{-1}\cdot)$ (see \cite{TruesdellNollNLFT}). 

An explicit example of such a family of strain-limiting functions with limiting small strains, inspired by \cite{Raj2010, MAIWALTONELLIP, MAIWALTONMON}, is generated by
\begin{align}
	\bs f_1(\bar{\bs S}) = a ( 1 + a^p |\bar{\bs S}|^{p} )^{-1/p} \bar{\bs S}, \quad \bar{\bs S} \in \Sym, \label{eq:examplestrainlimiting}
\end{align} 
where $a > 0$ and $p \geq 1$. Then for all $\bs E, \bar{\bs S}$, 
\begin{align}
	  |\bs f_1(\bs E, \bar{\bs S})| \leq 1. \label{eq:strainlimitbd1}
\end{align}
Using the product and chain rules, we compute the Frech\'et derivative, 
\begin{align}
	D_{\bar{\bs S}} \bs f_1(\bar{\bs S}) = a (1 + b^p |\bar{\bs S}|^p)^{-1/p} \msb I - 
	 a (1 + b^p |\bar{\bs S}|^p)^{-1/p-1} b^p |\bar{\bs S}|^{p-2} \bar{\bs S} \tens \bar{\bs S},  
\end{align}
where $\msb I : \Sym \rar \Sym$ is the identity map. By the triangle inequality, the operator norm of $D_{\bar{\bs S}} \bs f_1$ satisfies
\begin{align}
	\| D_{\bar{\bs S}} \bs f_1 (\bar{\bs S}) \| \leq a ( 1 + b^p |\bar{\bs S}|^p)^{-1/p} \Bigl (
	1 + b^p (1 + b^p |\bar{\bs S}|^p)^{-1} |\bar{\bs S}|^p \Bigr ) \leq 2 a. \label{eq:Frechetbound1}
\end{align} 
The fundamental theorem of calculus and \eqref{eq:Frechetbound1} then imply that 
\begin{align}
	|\bs f_1 (\bar{\bs S}_2) - \bs f_1 (\bar{\bs S}_1)| &= 
	\Bigl |
	\int_0^1 D_{\bar{\bs S}} \bs f_\delta(s\bar{\bs S}_2 + (1-s)\bar{\bs S}_1)[\bar{\bs S}_2 - \bar{\bs S}_1] ds
	\Bigr | \\&\leq 2\delta a |\bar{\bs S}_2 - \bar{\bs S}_1|. \label{eq:straimlimitbd2}
\end{align}
By \eqref{eq:strainlimitbd1} and \eqref{eq:straimlimitbd2}, we conclude that $\bs f_\delta = \delta \bs f_1$ is a family of strain-limiting functions with limiting small strains. We leave it to the reader to verify that $\bs f_1 = \p_{\bar{\bs S}} W^*$ for an appropriately chosen twice continuously differentiable, convex function $W^*$. 

A family of strain-limiting functions with limiting small strains outside of the scope of our discussion thus far is the following. Let $E_0$ and $a$ be fixed positive constants. For $\delta > 0$ small, define   
\begin{gather}
	 E_{\delta}(\bs E)
	 = \delta^{-1} E_0 \Bigl [
	1 + a \delta^{-1} ([\det(\bs I + 2 \bs E)]^{-1/2} - 1)
	\Bigr ], \\
	\bs f_\delta(\bs E, \bar{\bs S}) = \frac{1+\nu}{E_\delta} \bar{\bs S} - \frac{\nu}{E_\delta} (\tr \bar{\bs S}) \bs I, \label{eq:powerdependentmodulus1}
\end{gather}
One may interpret the constitutive relation $\bs E = \bs f_\delta(\bs E, \bar{\bs S})$ as a generalization of the classical linear constitutive relation for an isotropic solid with a generalized Young's modulus $E_\delta$ depending on the density via $E_\delta = \delta^{-1} E_0 \bigl [ 1 + a \delta^{-1} (\rho/\rho_R - 1) \bigr ]$ (see \eqref{eq:lindensity}). 

Suppose that $b > 0$, $c > 0$ and $ab < 1/2$. We claim that for all $\delta$ sufficiently small (depending on $a$ and $b$), \eqref{eq:powerdependentmodulus1} is a family of strain-limiting functions with limiting small strains on domains 
$B(\bs 0, b \delta) \times B(\bs 0, c)$. Towards proving our claim, we observe that if $\bs E \in B(\bs 0, b \delta)$, then for all $\delta$ sufficiently small, 
\begin{align}
	\Bigl |[\det(\bs I + 2 \bs E)]^{-1/2} - 1 \Bigr | = |-\tr \bs E + o(|\bs E|)| \leq \sqrt{3} |\bs E| + o(|\bs E|) \leq 2 |\bs E|.  
\end{align}
Above we used the fact that by the Cauchy-Schwarz inequality, for any tensor $\bs A$ we have $|\tr \bs A| = |\bs I \cdot \bs A| \leq \sqrt{3} |\bs A|$. By the reverse triangle inequality, we have
\begin{align}
	E_\delta(\bs E) \geq \delta^{-1} E_0 (1 - 2 a \delta^{-1}|\bs E|) \geq \delta^{-1} E_0 (1 - 2 a b) > 0. \label{eq:Edeltalower}
\end{align}
We conclude that
\begin{align}
	 |\bs f_\delta(\bs E, \bar{\bs S}) |
	\leq \frac{1+\nu}{E_\delta}|\bar{\bs S}| + \frac{\nu}{E_\delta}|\tr \bar{\bs S}||\bs I|
	\leq \frac{(1+4\nu)c}{E_0(1 - 2ab)} \delta. \label{eq:densitydep1}
\end{align}
Let $\bs E_1, \bs E_2 \in B(\bs 0, b \delta)$ and $\bar{\bs S}_1, \bar{\bs S}_2 \in B(\bs 0, c)$. Then 
\begin{gather}
\bs f_\delta(\bs E_2, \bar{\bs S}_2) - \bs f_\delta(\bs E_1, \bar{\bs S}_1) =
\frac{E_\delta(\bs E_1) - E_\delta(\bs E_2)}{E_\delta(\bs E_1) E_\delta(\bs E_2)}[(1+\nu)\bar{\bs S}_2 - \nu (\tr \bar{\bs S}_2) \bs I] \\
+ \frac{1}{E_\delta(\bs E_1)}[(1+\nu)(\bar{\bs S}_2 - \bar{\bs S}_1) - \nu (\tr[\bar{\bs S}_2 - \bar{\bs S}_1]) \bs I]. \label{eq:fdeltadiff}
\end{gather}
For all $\delta$ sufficiently small, we have, 
\begin{gather}
\forall s \in [0,1], \quad |(\bs I + s \bs E_1 + (1-s) \bs E_2)^{-1}| \leq (1 - b\delta)^{-1} \leq 2, \\	
\forall s \in [0,1], \quad |\det (\bs I + s \bs E_1 + (1-s) \bs E_2)| \geq 1/2.
\end{gather}
By the fundamental theorem of calculus and the chain rule, we have
\begin{gather}
|E_\delta(\bs E_1) - E_\delta(\bs E_2)| \\= \delta^{-2} E_0 a
\Big | \int_0^1 -\frac{1}{2} [\det (\bs I + s\bs E_1 + (1-s)\bs E_2)]^{-1/2} \\ \qquad \times (\bs I + s\bs E_1 + (1-s)\bs E_2)^{-1} \cdot (\bs E_1 - \bs E_2) ds \Bigr |,
\end{gather}
and thus,
\begin{align}
	|E_\delta(\bs E_1) - E_\delta(\bs E_2)| \leq \delta^{-2} E_0 a \sqrt{2}. \label{eq:Edeltadiff} 
\end{align}
We also have 
\begin{gather}
|(1+\nu)\bar{\bs S}_2 - \nu (\tr \bar{\bs S}_2) \bs I| \leq (1+4\nu)|\bar{\bs S}_2| \leq (1+4\nu)c, \\
|(1+\nu)(\bar{\bs S}_2 - \bar{\bs S}_1) - \nu (\tr[\bar{\bs S}_2 - \bar{\bs S}_1]) \bs I| \leq (1 + 4\nu)|\bar{\bs S}_1 - \bar{\bs S}_2|. 
\end{gather}
Using the previous two estimates along with \eqref{eq:Edeltadiff} and \eqref{eq:Edeltalower}, we conclude from \eqref{eq:fdeltadiff} and repeated use of the triangle inequality that 
\begin{gather}
|\bs f_\delta(\bs E_2, \bar{\bs S}_2) - \bs f_\delta(\bs E_1, \bar{\bs S}_1)| \\ \leq 
 \frac{(1+4\nu)ac \sqrt{2}}{E_0(1-2ab)^2}|\bs E_2 - \bs E_1| + 
 \frac{1+4\nu}{E_0(1-2ab)} \delta |\bar{\bs S}_2 - \bar{\bs S}_1|. \label{eq:densitydep2}
\end{gather}
By \eqref{eq:densitydep1} and \eqref{eq:densitydep2}, we conclude that for all $\delta$ sufficiently small (depending on $a$ and $b$), \eqref{eq:powerdependentmodulus1} is a family of strain-limiting functions with limiting small strains.

A final example of a family of strain-limiting functions with limiting small strains is given in terms of a slightly different form of a density dependent generalized Young's modulus: 
\begin{gather}
	E_\delta = \delta^{-1} E_0 \Bigl [
	1 + a \delta^{-1} ([\det (\bs I + 2 \bs E)]^{1/2} - 1)
	\Bigr ]^{-1}, \\
		\bs f_\delta(\bs E, \bar{\bs S}) = \frac{1+\nu}{E_\delta} \bar{\bs S} - \frac{\nu}{E_\delta} (\tr \bar{\bs S}) \bs I, \label{eq:powerdependentmodulus2}
\end{gather}
Again, by \eqref{eq:lindensity}, one may interpret the generalized Young's modulus as depending on the density via $E_\delta = \delta^{-1} E_0 \bigl [ 1 + a \delta^{-1} (\rho_R/\rho - 1) \bigr ]^{-1}$.
We note that both \eqref{eq:powerdependentmodulus1} and \eqref{eq:powerdependentmodulus2} have the physical property that an increase in density, $\rho$, yields an increase in the Young's modulus, $E_\delta$. Arguing as above, one may conclude that \eqref{eq:powerdependentmodulus2} also yields a family of strain-limiting functions with limiting small strains on domains $U_\delta \times V = B(\bs 0, b) \times B(\bs 0, c)$ for appropriate choices of $b$ and $c$.

\subsection{Nonlinear relations between stress and linearized strain} 

We now show that a family of constitutive relations of the form $\bs E = \bs f_\delta(\bs E, \bar{\bs S})$, with $\bs E$ interpreted as the Green-Saint Venant strain tensor associated to a deformation and $\bar{\bs S}$ as the symmetric Piola-Kirchhoff stress, asymptotically yield nonlinear relations between stress and the associated linearized strain $\bs \eps$, up to a $\bs O(\delta^2)$ error. Moreover, Corollary \ref{c:1} and the examples discussed thereafter show that in general, the resulting asymptotic relations between stress and linearized strain can be genuinely nonlinear to leading order in $\delta$. This stands in contrast to the setting of a fixed constitutive relation, which under mild differentiability assumptions, always can be asymptotically reduced to a linear relation between stress and linearized strain to leading order in the displacement gradient $|\bs F - \bs I|$. 

The preliminary set-up for our result is as follows. For an element $\bs E_\delta \in B(\bs 0, 1/2)$, rotation $\bs R_\delta$, and $\bar{\bs S} \in \Sym$, we associate a deformation gradient $\bs F_\delta$ via
\begin{gather}
	\bs C_\delta := \bs I +2\bs E_\delta, \quad \bs F_\delta := \bs R_\delta \bs C_\delta^{1/2}, \quad  \bs \eps_\delta := \frac{1}{2} (\bs F_\delta + \bs F_\delta^T) - \bs I, \label{eq:Fdeltaeq}
\end{gather}
and a stress variable, $\bs \sigma_\delta$, by
\begin{align}
	\bs \sigma_\delta := \frac{1}{2} \bigl (\bs F_\delta \bar{\bs S} + \bar{\bs S} \bs F_\delta^{T} \bigr ). 
\end{align}
The tensor $\bs \sigma_\delta$ can be interpreted as the symmetric part of an associated first Piola-Kirchhoff stress tensor.  

\begin{prop}\label{p:2}
	Let $\bs f_\delta : U_\delta \times V \rar \Sym$ be a family of strain-limiting functions with limiting small strains. Let $\bar{\bs S} \in V$. Assume that there exists $r > 0$ such that for each $\delta > 0$ sufficiently small, there exists $\bs E_\delta \in U_\delta$ such that $B(\bs E_\delta, r \delta) \subseteq U_\delta$ and 
	\begin{gather}
		%	\bs E = \frac{1}{2} \log (I + \nabla \bs u + (\nabla \bs u)^T + \nabla \bs u (\nabla \bs u)^T), \\ 
		\bs E_\delta = \bs f_\delta(\bs E_\delta, \bar{\bs S}). \label{eq:Hdeltaeq2}
	\end{gather}
	In addition, suppose that there exists a fixed dimensionless constant $C_2 > 0$ such that for all $\delta$ sufficiently small,
	\begin{align}	
|\bs R_\delta - \bs I| < C_2 \delta. \label{eq:smallrotation}
	\end{align}

	Then for all $\delta$ sufficiently small, $\bs (\bs \eps_\delta, \bs \sigma_\delta) \in U_\delta \times V$, and as $\delta \rar 0$, 
	\begin{align}
		\bs \eps_\delta = \bs f_\delta(\bs \eps_\delta, \bs \sigma_\delta) + \bs O(\delta^2), \label{eq:smallstrain3}
	\end{align}
	where the big-oh term depends on $C_0$, $C_1$, $C_2$, and $D_0|\bar{\bs S}|$. 
\end{prop}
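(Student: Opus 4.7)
The plan is to show that $\bs \eps_\delta$ differs from $\bs E_\delta$ by $\bs O(\delta^2)$ and $\bs \sigma_\delta$ differs from $\bar{\bs S}$ by $\bs O(\delta|\bar{\bs S}|)$, and then to transfer the exact relation $\bs E_\delta = \bs f_\delta(\bs E_\delta, \bar{\bs S})$ to the approximate relation \eqref{eq:smallstrain3} by invoking the Lipschitz bounds \eqref{eq:strainlimiting1}--\eqref{eq:strainlimiting2}. The main step is controlling the symmetric part of $\bs F_\delta - \bs I$ to quadratic order, which requires exploiting the orthogonality of $\bs R_\delta$.

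The first computation is the Taylor expansion $\bs C_\delta^{1/2} = \bs I + \bs E_\delta + \bs O(\delta^2)$, valid for $\delta$ small because $|\bs E_\delta| = |\bs f_\delta(\bs E_\delta, \bar{\bs S})| \leq C_0 \delta$ by \eqref{eq:strainlimiting1}. Combined with $|\bs R_\delta - \bs I| < C_2 \delta$ this yields
\begin{align}
\bs F_\delta - \bs I = (\bs R_\delta - \bs I) + \bs R_\delta(\bs C_\delta^{1/2} - \bs I) = (\bs R_\delta - \bs I) + \bs E_\delta + \bs O(\delta^2).
\end{align}
The crucial algebraic observation is that the orthogonality relation $\bs R_\delta \bs R_\delta^T = \bs I$ implies
\begin{align}
(\bs R_\delta - \bs I) + (\bs R_\delta - \bs I)^T = -(\bs R_\delta - \bs I)(\bs R_\delta - \bs I)^T,
\end{align}
whose magnitude is at most $C_2^2 \delta^2$. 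Symmetrizing the previous identity and dividing by $2$ then gives $\bs \eps_\delta = \bs E_\delta + \bs O(\delta^2)$, where the constant depends on $C_0$ and $C_2$. A parallel but simpler estimate, using only $\bs F_\delta - \bs I = \bs O(\delta)$, yields $\bs \sigma_\delta = \bar{\bs S} + \bs O(\delta|\bar{\bs S}|)$.

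Before applying Lipschitz continuity I would verify the domain conditions. Since $|\bs \eps_\delta - \bs E_\delta| \leq C_3 \delta^2 < r \delta$ for $\delta$ sufficiently small, the hypothesis $B(\bs E_\delta, r \delta) \subseteq U_\delta$ gives $\bs \eps_\delta \in U_\delta$; and since $V$ is open and $\bs \sigma_\delta \to \bar{\bs S} \in V$, we also have $\bs \sigma_\delta \in V$ for $\delta$ sufficiently small. Finally, \eqref{eq:strainlimiting1}--\eqref{eq:strainlimiting2} yield
\begin{align}
|\bs f_\delta(\bs E_\delta, \bar{\bs S}) - \bs f_\delta(\bs \eps_\delta, \bs \sigma_\delta)| \leq C_1 |\bs E_\delta - \bs \eps_\delta| + D_0 \delta |\bar{\bs S} - \bs \sigma_\delta| = \bs O(\delta^2),
\end{align}
with the constant depending on $C_0, C_1, C_2,$ and $D_0|\bar{\bs S}|$. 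Combined with the identity $\bs E_\delta = \bs f_\delta(\bs E_\delta, \bar{\bs S})$ and the estimate $\bs \eps_\delta = \bs E_\delta + \bs O(\delta^2)$ from the previous step, this gives \eqref{eq:smallstrain3}.

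The main obstacle I expect is the orthogonality-based cancellation that makes $(\bs R_\delta - \bs I) + (\bs R_\delta - \bs I)^T$ of order $\delta^2$ rather than $\delta$; without this, one would only obtain $\bs \eps_\delta = \bs E_\delta + \bs O(\delta)$, which is too weak to conclude \eqref{eq:smallstrain3}. Once that algebraic identity is in hand, the remainder of the proof is a routine sequence of Taylor expansion, triangle inequality, and Lipschitz estimation.
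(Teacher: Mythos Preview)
Your proof is correct and follows the same overall strategy as the paper: establish $\bs \eps_\delta - \bs E_\delta = \bs O(\delta^2)$ and $\bs \sigma_\delta - \bar{\bs S} = \bs O(\delta|\bar{\bs S}|)$, verify the domain conditions, and then transfer the constitutive relation via the Lipschitz bounds \eqref{eq:strainlimiting1}--\eqref{eq:strainlimiting2}.

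The one noteworthy difference is in how you reach $\bs \eps_\delta = \bs E_\delta + \bs O(\delta^2)$. You expand $\bs C_\delta^{1/2}$ and then invoke the orthogonality identity $(\bs R_\delta - \bs I) + (\bs R_\delta - \bs I)^T = -(\bs R_\delta - \bs I)(\bs R_\delta - \bs I)^T$ to suppress the symmetric part of $\bs R_\delta - \bs I$. The paper instead simply bounds $|\bs F_\delta - \bs I| = O(\delta)$ by the triangle inequality and then applies the exact algebraic identity
\[
\bs E_\delta \;=\; \bs \eps_\delta + \tfrac{1}{2}(\bs F_\delta - \bs I)^T(\bs F_\delta - \bs I),
\]
which immediately gives $|\bs E_\delta - \bs \eps_\delta| \leq \tfrac{1}{2}|\bs F_\delta - \bs I|^2 = \bs O(\delta^2)$. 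Thus the ``orthogonality-based cancellation'' you flag as the main obstacle is in fact unnecessary: the quadratic gap between $\bs E_\delta$ and $\bs \eps_\delta$ follows directly from $|\bs F_\delta - \bs I| = O(\delta)$, independent of how $\bs F_\delta - \bs I$ splits into rotational and stretch parts. The paper's route is slightly more economical for this reason, though your argument is equally valid.
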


\begin{proof}
	For all $\delta > 0$ sufficiently small, we use the bound $|(\bs I + 2 \bs E_\delta)^{1/2}| \leq (3 + 2\sqrt{3}|\bs E_\delta|)^{1/2}$, \eqref{eq:smallrotation}, \eqref{eq:Hdeltaeq2} and \eqref{eq:strainlimiting1} to obtain
	\begin{align}
		|\bs F_\delta - \bs I| &= |\bs R_\delta (\bs I + 2\bs E_\delta)^{1/2} - \bs I| \\
		&\leq  |(\bs R_\delta - \bs I) (\bs I + 2\bs E_\delta)^{1/2}| + |(\bs I + 2\bs E_\delta)^{1/2} - \bs I| \\
		&\leq (3 + 2 \sqrt{3} |\bs E_\delta|)^{1/2} |\bs R_\delta - \bs I| + 
		(1 - 2 C_0 \delta)^{-1/2}|\bs E_\delta| \\
		&\leq \Bigl [(3 + 2 \sqrt{3} C_0 \delta)^{1/2} C_2 + (1 - 2 C_0 \delta)^{-1/2} C_0 \Bigr ] \delta. \label{eq:Fdeltaest}
	\end{align}
	Since $\bs C_\delta = \bs F_\delta^T \bs F_\delta$ and $\bs E_\delta = \frac{1}{2} (\bs C_\delta - \bs I)$, we conclude that
	\begin{align}
		\bs C_\delta = \bs I + 2\bs \eps_\delta + \bs O(\delta^2), \quad \bs E_\delta = \bs \eps_\delta + \bs O(\delta^2), \label{eq:linstrains3}
	\end{align}
	and 
	\begin{align}
		\bs \sigma_\delta = \frac{1}{2}(\bs I + \bs O(\delta))\bar{\bs S} + \frac{1}{2}\bar{\bs S}(\bs I + \bs O(\delta)^T) = 
		\bar{\bs S} + \bs O(\delta)\bar{\bs S} + \bar{\bs S}\bs O(\delta)^T, \label{eq:linstress}
	\end{align}
	where the big-oh terms depend only on $C_0$ and $C_2$.
	
	Since $\bar{\bs S} \in V$ and $V$ is open, there exists $r_0 > 0$ such that $B(\bar{\bs S}, r_0) \subseteq V$. The relations \eqref{eq:linstrains3} and \eqref{eq:linstress} imply that there exists a constant $C$ depending on $C_0$ and {$C_2$} such that for all $\delta$ sufficiently small, 
	\begin{gather}
		|\bs E_\delta - \bs \eps_\delta| \leq C \delta^2 < r \delta, \\
		|\bs \sigma_\delta - \bar{\bs S}| \leq C \delta |\bar{\bs S}|, \label{eq:estimatestress}
	\end{gather} 
	and thus, $(\bs \eps_\delta, \bar{\bs S}) \in B(\bs E_\delta, r \delta) \times B(\bar{\bs S}, r_0) \subseteq U_\delta \times V$ for all $\delta$ sufficiently small. Then by \eqref{eq:Hdeltaeq2}, \eqref{eq:linstrains3}, \eqref{eq:strainlimiting1}, and \eqref{eq:strainlimiting2}, 
	\begin{align}
		\bs \eps_\delta &= \bs f_\delta(\bs E_\delta, \bar{\bs S}) + \bs \eps_\delta - \bs E_\delta \\
		&= \bs f_\delta(\bs \eps_\delta, \bs \sigma_\delta) + [\bs f_\delta(\bs \eps_\delta, \bar{\bs S}) - \bs f_\delta(\bs \eps_\delta, \bs \sigma_\delta) + \bs \eps_\delta - \bs E_\delta + \bs f_\delta(\bs E_\delta, \bar{\bs S}) - \bs f_\delta(\bs \eps_\delta, \bar{\bs S})] \\
		&= \bs f_\delta(\bs \eps_\delta, \bs \sigma_\delta) + \bs O(\delta^2), 
	\end{align}
	as $\delta \rar 0$, where the big-oh term depends only on $C_0$, $C_1$, $C_2$, and $|D_0||\bar{\bs S}|$. This concludes the proof. 
\end{proof}

We remark that the approximation \eqref{eq:linstress} between the stress variables and the associated bound \eqref{eq:estimatestress} have the same form as
the classical setting where the norm of the displacement gradient $\delta_0$ serves as the asymptotic parameter rather than $\delta$. Indeed, consider the one-dimensional example from Section 1.2 and the notation therein. Let $\sigma_\delta = \eps/(a \delta)$, and let $\delta$, $a$, and $p$ be fixed (i.e., the constitutive relation is fixed). From \eqref{eq:delta0approximation} and the fact that $\delta_0 = |\eps|$, we conclude that 
\begin{align}
\sigma_\delta = \bar S\bigl [1 + O(\delta_0/\delta)\bigr ]^{-1} = \bar S \bigl [1 + O(\delta_0/\delta)\bigr ], \label{eq:onedestimate}
\end{align}
for all $\delta_0$ sufficiently small. We then obtain 
\begin{align}
	|\sigma_\delta - \bar S| \leq \frac{C}{ \delta} |\bar S| \delta_0, \label{eq:estimateonedstress}
\end{align}
for some absolute constant $C > 0$. This is the same form as \eqref{eq:estimatestress} where $\delta_0$ is now the small changing parameter rather than $\delta$. 

Moreover, we comment that inserting $a \bar S = O(\delta_0/\delta)$ (derived from \eqref{eq:delta0approximation}) into the right-hand side of \eqref{eq:estimateonedstress} gives the \textit{false} impression that using $\delta_0$ as the asymptotic parameter results in a ``stronger" approximation between the stress variables
\begin{align}
	|\sigma_\delta - \bar S| \leq \frac{\tilde C}{a \delta^2} \delta_0^2. \label{eq:quadratic} 
\end{align}
One should not necessarily view the above estimate as stronger than \eqref{eq:estimatestress} since $a \sigma_\delta$ and $a \bar S$ are $O(\delta_0/\delta)$, but in \eqref{eq:estimatestress}, $\bs \sigma_\delta$ and $\bar{\bs S}$ are $O(1)$ relative to a fixed stress scale.

\begin{cor}\label{c:1}
	In addition to the assumptions of Proposition \ref{p:2}, suppose that there exist an open set $U \subseteq \Sym$, $\bs f_1 : U \times  V \rar \Sym$, and a constant $C_3 > 0$ such that 
	\begin{align}
		U_\delta = \delta U = \{ \delta \tilde{\bs E} \mid \tilde{\bs E} \in U \}, 
	\end{align}
	and 
	\begin{align}
		\forall (\bs E, \bar{\bs S}) \in U_\delta \times V, \quad 
		\bigl |
		\bs f_\delta(\bs E, \bar{\bs S}) - \delta \bs f_1 \bigl ( \delta^{-1} \bs E, \bar{\bs S}) 
		\bigr | \leq C_3 \delta^2. \label{eq:addassumpt}
	\end{align}
	
	Then, in the notation of Proposition \ref{p:2}, we have as $\delta \rar 0$,  
	\begin{align}
		\bs \eps_\delta = \delta \bs f_1(\delta^{-1} \bs \eps_\delta, \bs \sigma_\delta) + \bs O(\delta^2), 
	\end{align}
	where the big-oh term depends on $C_0, C_1, C_2$, $C_3$, and $D_0|\bar{\bs S}|$. 
\end{cor}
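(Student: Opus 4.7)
The plan is to deduce Corollary \ref{c:1} as a direct consequence of Proposition \ref{p:2} together with the new pointwise approximation hypothesis \eqref{eq:addassumpt}. First, I apply Proposition \ref{p:2}, which produces two pieces of output for $\delta$ sufficiently small: the membership $(\bs \eps_\delta, \bs \sigma_\delta) \in U_\delta \times V$, and the asymptotic identity
\begin{equation*}
\bs \eps_\delta = \bs f_\delta(\bs \eps_\delta, \bs \sigma_\delta) + \bs O(\delta^2),
\end{equation*}
with big-oh constant depending on $C_0$, $C_1$, $C_2$, and $D_0|\bar{\bs S}|$.

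Since $(\bs \eps_\delta, \bs \sigma_\delta)$ lies in $U_\delta \times V$, I may evaluate the pointwise hypothesis \eqref{eq:addassumpt} at this exact point, obtaining
\begin{equation*}
\bs f_\delta(\bs \eps_\delta, \bs \sigma_\delta) = \delta \bs f_1(\delta^{-1} \bs \eps_\delta, \bs \sigma_\delta) + \bs O(\delta^2),
\end{equation*}
where the error term is bounded by $C_3 \delta^2$. Substituting this back into the identity from Proposition \ref{p:2} immediately yields
\begin{equation*}
\bs \eps_\delta = \delta \bs f_1(\delta^{-1} \bs \eps_\delta, \bs \sigma_\delta) + \bs O(\delta^2),
\end{equation*}
where the combined big-oh constant depends on $C_0, C_1, C_2, C_3$, and $D_0|\bar{\bs S}|$, as claimed.

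I do not anticipate any real obstacle: the hypothesis \eqref{eq:addassumpt} is precisely the uniform $\bs O(\delta^2)$ comparison needed to swap $\bs f_\delta$ for $\delta \bs f_1(\delta^{-1}\cdot,\cdot)$ inside the leading-order relation, and Proposition \ref{p:2} has already done the nontrivial work of verifying that the relevant evaluation point $(\bs \eps_\delta, \bs \sigma_\delta)$ sits inside the domain $U_\delta \times V$ where this comparison is valid. The only bookkeeping is adding the two $\bs O(\delta^2)$ errors, which simply appends $C_3$ to the list of constants controlling the implicit big-oh constant.
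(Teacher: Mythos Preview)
Your proposal is correct and matches the paper's approach: the paper gives no explicit proof of Corollary~\ref{c:1}, treating it as immediate from Proposition~\ref{p:2} together with the hypothesis \eqref{eq:addassumpt}, which is exactly the two-step substitution you carry out.
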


We now discuss the examples from Section 3.2 within the context of Proposition \ref{p:2} and Corollary \ref{c:1}. It is clear that a general family of strain-limiting functions with limiting small strains given by \eqref{eq:simplestrainlimiting} satisfies the hypotheses of both Proposition \ref{p:2} and Corollary \ref{c:1}. In particular, if \eqref{eq:simplehyper} holds with a twice continuously differentiable, convex function $W^*$, then Corollary \ref{c:1} implies in the notation therein that
\begin{align}
		\bs \eps_\delta =  \p_{\bs \sigma_\delta}[\delta W^*(\bs \sigma_\delta)] + \bs O(\delta^2), \quad \bs \sigma_\delta \in V. \label{eq:simplehyperlin}
\end{align}
Omitting the $\bs O(\delta^2)$ term yields $\bs \eps_\delta =  \p_{\bs \sigma_\delta}[\delta W^*(\bs \sigma_\delta)]$ or, equivalently, 
\begin{align}
\bs \sigma_\delta = \p_{\bs \eps_\delta}[\delta W \bigl (\delta^{-1}\bs \eps_\delta \bigr )], \label{eq:Greenelasticlin}
\end{align}
where $W$ is the Legendre transform of $W^*$.
The relation \eqref{eq:Greenelasticlin} models a Green elastic solid, i.e., the stress is derived from a strain energy depending on the linearized strain. In contrast to classical linearized elasticity, however, our framework allows for strain energies $W$ beyond quadratic forms of the linearized strain.  

We now consider the explicit family \eqref{eq:powerdependentmodulus1}. We write $\bs f_\delta(\bs E, \bar{\bs S}) = \delta \bs g_\delta(\bs E/\delta, \bar{\bs S})$ where 
\begin{align}
	\bs g_\delta(\tilde{\bs E}, \bar{\bs S}) = \frac{1+\nu}{\delta E_\delta(\delta \tilde{\bs E})} \bar{\bs S} - \frac{\nu}{\delta E_\delta(\delta \tilde{\bs E})} (\tr \bar{\bs S}) \bs I, \quad (\tilde{\bs E}, \bar{\bs S}) \in B(\bs 0, b) \times B(\bs 0, c),
\end{align}
and we note that for all $\tilde \delta$ and $b$ sufficiently small,
\begin{align}
	(0,\tilde \delta) \times B(\bs 0, b) \ni (\delta, \tilde{\bs E}) \mapsto \delta E_\delta(\delta \tilde{\bs E}) &= E_0 \Bigl [
	1 + a \delta^{-1} ([\det(\bs I + 2 \delta \tilde{\bs E})]^{-1/2} - 1)
	\Bigr ],
\end{align} extends smoothly to a function of $(\delta, \tilde{\bs E}) \in (-\tilde \delta, \tilde \delta) \times B(\bs 0, b)$. Moreover, we have for all $\tilde{\bs E} \in B(\bs 0, b)$, 
\begin{align}
	\delta E_\delta(\delta \tilde{\bs E}) = E_0 [
	1 - a \tr \tilde{\bs E} + O(a b^2 \delta)
	]. \label{eq:Edeltasmalldelta}
\end{align}
We observe that $\bs E = \bs f_\delta(\bs E, \bar{\bs S})$ if and only if 
\begin{align}
	\tilde{\bs E} = \bs g_\delta(\tilde{\bs E}, \bar{\bs S}), \label{eq:tilderelation}
\end{align}
where $\bs E = \delta\cdot \tilde{\bs E}$, $\bs g_0(\bs 0, \bs 0) = \bs 0$, and 
\begin{align}
D_{\tilde{\bs E}} \bigl (
\tilde{\bs E} - \bs g_\delta(\tilde{\bs E}, \bar{\bs S})
\bigr ) \Big |_{(\delta, \tilde{\bs E}, \bar{\bs S}) = (0,\bs 0, \bs 0)} = \msb I.
\end{align} 
Thus, by the implicit function theorem and choosing $\tilde \delta$, $b$, and $c$ sufficiently small, we conclude that for each $0 < \delta < \tilde \delta$ and $\bar{\bs S} \in B(\bs 0, c)$, there exists a unique $\tilde{\bs E}_\delta \in B(\bs 0, b)$ satisfying \eqref{eq:tilderelation}. Written differently, we have shown that for each $0 < \delta < \tilde \delta$ and $\bar{\bs S} \in B(\bs 0, c)$, there exists a unique $\bs E_\delta \in B(\bs 0, \delta b)$ satisfying $\bs E_\delta = \bs f_\delta(\bs E_\delta, \bar{\bs S})$, and the hypotheses of Proposition \ref{p:2} are satisfied on $U_\delta \times V = B(\bs 0, b \delta) \times B(\bs 0, c)$. 

Finally, define 
\begin{align}
	\bs f_1(\tilde{\bs E}, \bar{\bs S}) = \bs g_0(\tilde{\bs E}, \bar{\bs S})
	= E_0^{-1} (
	1 - a \tr \tilde{\bs E}
	)^{-1}\bigl [
	(1+\nu) \bar{\bs S} - \nu (\tr \bar{\bs S}) \bs I
	\bigr ]. 
\end{align}
By \eqref{eq:Edeltasmalldelta}, it follows that for all $\bs E \in B(\bs 0, \delta b)$ (with $ab < 1/2$), 
\begin{align}
	\Bigl | \frac{1}{E_\delta(\bs E)} - \frac{\delta}{E_0(1 - a\tr (\delta^{-1} \bs E))} \Bigr | &= 
 \frac{\bigl | \delta^{-1} E_0(1 - a \tr(\delta^{-1}\bs E)) - E_\delta(\bs E) \bigr |}{
	\delta^{-1} E_0(1 - a\tr (\delta^{-1} \bs E)) E_\delta(\bs E)} \\
	&\leq \frac{a \bigl |1 - [\det(\bs I + 2 \bs E)]^{-1/2} - \tr \bs E \bigr |}{E_0(1 -2 ab)^2} \\
	&\leq \frac{C a |\bs E|^2}{E_0(1 - 2ab)^2} \\
	&\leq \frac{C a b^2 \delta^2}{E_0(1 - 2ab)^2},
\end{align}
where $C$ is an absolute constant. Thus, for all $(\bs E, \bar{\bs S}) \in B(\bs 0, b \delta) \times B(\bs 0, c)$,  
\begin{align}
	\bigl |
	\bs f_\delta(\bs E, \bar{\bs S}) - \delta \bs f_1 \bigl ( \delta^{-1} \bs E, \bar{\bs S}) 
	\bigr | \leq \frac{C ab^2 (1 + 4\nu) c \delta^2}{E_0(1 - 2ab)^2}. \label{eq:estimate}
\end{align} 
The estimate \eqref{eq:estimate} proves that the hypotheses of Corollary \ref{c:1} are satisfied. In the notation therein,  
\begin{align}
	\bs \eps_\delta = \delta E_0^{-1} \bigl [
	1 - a \delta^{-1} \tr(\bs \eps_\delta)
	\bigr ]^{-1}\bigl [
	(1+\nu) \bs \sigma_\delta - \nu (\tr \bs \sigma_\delta) \bs I
	\bigr ] + \bs O(\delta^2), \label{eq:lindensitydependent1}
\end{align}
as $\delta \rar 0$, and omitting the $\bs O(\delta^2)$ term yields  
\begin{align}
	\bs \eps_\delta = \delta E_0^{-1} \bigl [
	1 - a \delta^{-1} \tr(\bs \eps_\delta)
	\bigr ]^{-1}\bigl [
	(1+\nu) \bs \sigma_\delta - \nu (\tr \bs \sigma_\delta) \bs I
	\bigr ]. \label{eq:lindensitydependent1b}
\end{align}

One may similarly show that the hypotheses of Proposition \ref{p:2} and Corollary \ref{c:1} are satisfied for \eqref{eq:powerdependentmodulus2} on a family of nontrivial open sets $U_\delta \times V$ with 
\begin{align}
	\bs f_1(\tilde{\bs E}, \bar{\bs S}) = E_0^{-1} (
	1 + a \tr \tilde{\bs E}
	)\bigl [
	(1+\nu) \bar{\bs S} - \nu (\tr \bar{\bs S}) \bs I
	\bigr ].
\end{align}
In the notation of Proposition \ref{p:2} and Corollary \ref{c:1}, we conclude that
\begin{align}
	\bs \eps_\delta = \delta E_0^{-1} \bigl [
	1 + a \delta^{-1} \tr(\bs \eps_\delta)
	\bigr ]\bigl [
	(1+\nu) \bs \sigma_\delta - \nu (\tr \bs \sigma_\delta) \bs I
	\bigr ] + \bs O(\delta^2), \label{eq:lindensitydependent2}
\end{align}
as $\delta \rar 0$, and omitting the $\bs O(\delta^2)$ term yields
\begin{align}
	\bs \eps_\delta = \delta E_0^{-1} \bigl [
	1 + a \delta^{-1} \tr(\bs \eps_\delta)
	\bigr ]\bigl [
	(1+\nu) \bs \sigma_\delta - \nu (\tr \bs \sigma_\delta) \bs I
	\bigr ]. \label{eq:lindensitydependent2b}
\end{align}

Constitutive relations of the form 
\begin{align}
	\bs \eps = E_{\ms{ref}}^{-1}\bigl [ 1 - \Gamma \tr \bs \eps \bigr ]^{-1} \bigl [
	(1+\nu) \bs \sigma - \nu (\tr \bs \sigma) \bs I
	\bigr ], \label{eq:lindensityconst1}
\end{align}
or 
\begin{align}
	\bs \eps = E_{\ms{ref}}^{-1}\bigl [ 1 + \Gamma \tr \bs \eps \bigr ] \bigl [
	(1+\nu) \bs \sigma - \nu (\tr \bs \sigma) \bs I
	\bigr ], \label{eq:lindensityconst2}
\end{align}
where $\bs \sigma$ is the stress variable, have been appeared extensively in the literature; see e.g. \cite{RAJDENSITY21, D32021, D32021II, MURRURAJ21UNI, MURRAJRAJ21BI, ITOUKOVRAJ21, PRUSARAJWINE22, VAJMURRAJ22}. Our framework and the expansions \eqref{eq:lindensitydependent1} and \eqref{eq:lindensitydependent2} place \eqref{eq:lindensityconst1} and \eqref{eq:lindensityconst2} on firm mathematical footing as the asymptotic leading order relations (in $\delta$) arising from a constitutive theory for finite deformations. 

We conclude this section by noting that our results are not restricted solely to constitutive relations expressed via the Green-Saint Venant strain tensor and the symmetric Piola-Kirchhoff stress tensor. We demonstrate that analogues of Proposition \ref{p:2} and Corollary \ref{c:1} can be proved in terms of variables $\bs H$ and $\bs T$ representing the Hencky strain and Cauchy stress {for an isotropic elastic solid}. The preliminary set-up is as follows. For an element $(\bs H_\delta, \bs T) \in \Sym \times \Sym$ and rotation $\bs R_\delta$, we associate a deformation gradient $\bs F_\delta$ via
\begin{gather}
	\bs F_\delta := e^{\bs H_\delta}\bs R_\delta, \quad \bs \eps_\delta := \frac{1}{2} (\bs F_\delta + \bs F_\delta^T) - \bs I, 
\end{gather}
and a stress variable, $\bs \sigma_\delta$, via 
\begin{align}
\bs \sigma_\delta = (\det \bs F_\delta) \frac{1}{2} \bigl ( \bs T \bs F_\delta^{-T} + \bs F_\delta^{-1} \bs T \bigr ).
\end{align}
As before, the tensor $\bs \sigma_\delta$ can be interpreted as the symmetric part of an associated first Piola-Kirchhoff stress tensor.

\begin{prop}\label{p:3}
Let $\bs f_\delta : U_\delta \times V \rar \Sym$ be a family of strain-limiting functions with limiting small strains. Let $\bs T \in V$. Assume that there exists $r > 0$ such that for each $\delta > 0$ sufficiently small, there exists $\bs H_\delta \in U_\delta$ such that $B(\bs H_\delta,r\delta) \subseteq U_\delta$ and  
\begin{gather}
	%	\bs H = \frac{1}{2} \log (I + \nabla \bs u + (\nabla \bs u)^T + \nabla \bs u (\nabla \bs u)^T), \\ 
	\bs H_\delta = \bs f_\delta(\bs H_\delta, \bs T). \label{eq:Hdeltaeq}
\end{gather}
In addition, suppose that there exists a fixed dimensionless constant $C_2 > 0$ such that for all $\delta$ sufficiently small,
\begin{align}	
|\bs R_\delta - \bs I| < C_2 \delta. \label{eq:smallrotation2}
\end{align}

Then for all $\delta$ sufficiently small, $\bs (\eps_\delta, \bs \sigma_\delta) \in U_\delta \times V$, and as $\delta \rar 0$, 
\begin{align}
	\bs \eps_\delta = \bs f_\delta(\bs \eps_\delta, \bs \sigma_\delta) + \bs O(\delta^2), \label{eq:smallstrain4}
\end{align}
where the big-oh term depends only on $C_0$, $C_1$, {$C_2$}, and $D_0|\bs T|$.
\end{prop}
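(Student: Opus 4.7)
The plan is to mirror the proof of Proposition \ref{p:2}, but with the polar-type decomposition $\bs F_\delta = e^{\bs H_\delta} \bs R_\delta$ replacing $\bs F_\delta = \bs R_\delta (\bs I + 2 \bs E_\delta)^{1/2}$, and with the extra $(\det \bs F_\delta)\bs F_\delta^{-\bullet}$ factors in the Cauchy-stress-based definition of $\bs \sigma_\delta$ handled by Taylor expansion. The three ingredients we need are: (i) $\bs F_\delta - \bs I = \bs O(\delta)$, (ii) $\bs \eps_\delta - \bs H_\delta = \bs O(\delta^2)$, and (iii) $\bs \sigma_\delta - \bs T = \bs O(\delta)|\bs T|$ with the difference bounded in $B(\bs H_\delta, r\delta)$-sense by $\bs O(\delta^2)$ relative to $\bs H_\delta$; the conclusion then follows from exactly the same telescoping argument as in Proposition \ref{p:2}, using \eqref{eq:strainlimiting1} and \eqref{eq:strainlimiting2}.

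For step (i), I would use $|\bs H_\delta| \leq C_0 \delta$ (from \eqref{eq:strainlimiting1} and \eqref{eq:Hdeltaeq}) together with the power-series bound $|e^{\bs H_\delta} - \bs I - \bs H_\delta| \leq \frac{1}{2}|\bs H_\delta|^2 e^{|\bs H_\delta|}$ and \eqref{eq:smallrotation2} to write $\bs F_\delta = \bs I + \bs H_\delta + (\bs R_\delta - \bs I) + \bs O(\delta^2)$, giving $|\bs F_\delta - \bs I| \leq (C_0 + C_2) \delta + \bs O(\delta^2)$. For step (ii), taking the symmetric part annihilates $\bs H_\delta$ by symmetry only if we are careful about the cross term; what actually saves us is that for any rotation, $\bs R_\delta \bs R_\delta^T = \bs I$ forces $(\bs R_\delta - \bs I) + (\bs R_\delta - \bs I)^T = -(\bs R_\delta - \bs I)(\bs R_\delta - \bs I)^T = \bs O(\delta^2)$, so the symmetric part of $\bs F_\delta - \bs I$ equals $\bs H_\delta$ up to $\bs O(\delta^2)$, yielding $\bs \eps_\delta = \bs H_\delta + \bs O(\delta^2)$. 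In particular, for all $\delta$ sufficiently small, $|\bs \eps_\delta - \bs H_\delta| < r \delta$ and so $\bs \eps_\delta \in B(\bs H_\delta, r\delta) \subseteq U_\delta$.

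Step (iii) is the only place the proof genuinely differs in content from Proposition \ref{p:2}. Since $\bs R_\delta$ is a rotation, $\det \bs F_\delta = \det e^{\bs H_\delta} = e^{\tr \bs H_\delta} = 1 + \bs O(\delta)$. From $\bs F_\delta^{-1} = \bs R_\delta^T e^{-\bs H_\delta} = \bs R_\delta^T(\bs I - \bs H_\delta + \bs O(\delta^2))$ and its transpose, and using $\bs R_\delta = \bs I + \bs O(\delta)$, one obtains $\bs F_\delta^{-T} = \bs I + \bs O(\delta)$ and $\bs F_\delta^{-1} = \bs I + \bs O(\delta)$. Substituting into the definition of $\bs \sigma_\delta$ gives
\[
\bs \sigma_\delta = (1 + \bs O(\delta)) \cdot \tfrac{1}{2}\bigl(\bs T + \bs T \, \bs O(\delta) + \bs O(\delta) \, \bs T\bigr) = \bs T + \bs O(\delta)\bs T + \bs T \bs O(\delta),
\]
with all big-oh terms depending only on $C_0$ and $C_2$. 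Openness of $V$ around $\bs T$ then places $\bs \sigma_\delta$ in $V$ for $\delta$ small, so $(\bs \eps_\delta, \bs \sigma_\delta) \in U_\delta \times V$ as claimed.

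With (i)--(iii) in hand, the telescoping identity
\[
\bs \eps_\delta - \bs f_\delta(\bs \eps_\delta, \bs \sigma_\delta) = (\bs \eps_\delta - \bs H_\delta) + \bigl[\bs f_\delta(\bs H_\delta, \bs T) - \bs f_\delta(\bs \eps_\delta, \bs T)\bigr] + \bigl[\bs f_\delta(\bs \eps_\delta, \bs T) - \bs f_\delta(\bs \eps_\delta, \bs \sigma_\delta)\bigr]
\]
combined with \eqref{eq:strainlimiting1}--\eqref{eq:strainlimiting2} controls the three terms by $\bs O(\delta^2)$, $C_1 \cdot \bs O(\delta^2)$, and $D_0 \delta \cdot \bs O(\delta)|\bs T|$ respectively, giving \eqref{eq:smallstrain4} with a big-oh constant depending only on $C_0, C_1, C_2$, and $D_0|\bs T|$. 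The only subtle step I anticipate is keeping the rotation-induced cross terms in the symmetric part under control via the orthogonality identity in step (ii); beyond that, the argument is a matter of careful bookkeeping of Taylor remainders for the exponential, inverse, and determinant.
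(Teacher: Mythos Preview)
Your proposal is correct and follows essentially the same three-step-plus-telescope structure as the paper's proof. The only cosmetic difference is in step (ii): the paper obtains $\bs \eps_\delta = \bs H_\delta + \bs O(\delta^2)$ by noting $\bs H_\delta = \tfrac{1}{2}\log(\bs F_\delta \bs F_\delta^T)$ and Taylor-expanding the logarithm of $\bs I + 2\bs \eps_\delta + \bs O(\delta^2)$, whereas your direct expansion of $\bs F_\delta$ together with the rotation identity $(\bs R_\delta - \bs I) + (\bs R_\delta - \bs I)^T = -(\bs R_\delta - \bs I)(\bs R_\delta - \bs I)^T$ reaches the same conclusion.
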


\begin{proof}
For all $\delta > 0$ sufficiently small, we have by {\eqref{eq:smallrotation2}}, \eqref{eq:Hdeltaeq} and \eqref{eq:strainlimiting1} that
\begin{align}
	|\bs F_\delta - \bs I| &= |e^{\bs H_\delta} \bs R_\delta - \bs I| \\
	&\leq  |e^{\bs H_\delta} (\bs R_\delta - \bs I)| + |e^{\bs H_\delta} - \bs I| \\
	&\leq  e^{|\bs H_\delta|} |\bs R_\delta - \bs I| + 
|\bs H_\delta| e^{|\bs H_\delta|} \\
&\leq [C_2 + C_0] e^{C_0 \delta}  \delta.
\end{align}
Since $\bs H_\delta = \frac{1}{2} \log \bs F_\delta \bs F^T_\delta$, we conclude that
\begin{align}
\bs H_\delta = \bs \eps_\delta + \bs O(\delta^2), \quad
\bs \sigma_\delta = \frac{1}{2}\bs T(\bs I + \bs O(\delta) + \frac{1}{2}(\bs I + \bs O(\delta))\bs T,
\end{align}
where the big-oh terms depend only on $C_0$, $C_1$, {and $C_2$}. 

Arguing as in Proposition \ref{p:2}, we conclude that for all $\delta$ sufficiently small, $(\bs \eps_\delta, \bs \sigma_\delta) \in U_\delta \times V$, and
\begin{align}
	\bs \eps_\delta &= \bs f_\delta(\bs H_\delta, \bs T) + \bs \eps_\delta - \bs H_\delta \\
	&= \bs f_\delta(\bs \eps_\delta, \bs \sigma_\delta) + [ \bs f_\delta(\bs \eps_\delta, \bs T)-\bs f_\delta(\bs \eps_\delta, \bs \sigma_\delta)+\bs \eps_\delta - \bs H_\delta + \bs f_\delta(\bs H_\delta, \bs T) - \bs f_\delta(\bs \eps_\delta, \bs T)] \\
	&= \bs f_\delta(\bs \eps_\delta, \bs \sigma_\delta) + \bs O(\delta^2), 
\end{align}
as $\delta \rar 0$, where the big-oh term depends only on $C_0$, $C_1$, {$C_2$}, and $D_0|\bs T|$. This concludes the proof. 
\end{proof}

\section{Conclusion}
In this study, we have presented a rigorous asymptotic framework that provides a mathematical foundation for nonlinear constitutive relations between stress and linearized strain. Within our framework, we have demonstrated that a nonlinear constitutive relation between stress and linearized strain emerges as the primary term of an asymptotic expansion in $\delta$ of a family of strain-limiting, nonlinear constitutive relations,
\begin{align}
	\bs E = \bs f_\delta(\bs E, \bar{\bs S})
\end{align}
between stress and nonlinear strain. Here, the dimensionless parameter $\delta$ determining the leading order term is (up to a fixed constant) the limiting small strain of $\bs f_\delta$: $\forall \bs E, \bar{\bs S},$ $|\bs f_\delta(\bs E, \bar{\bs S})| \leq \delta$. {Although the size of the strain $\bs E$ is controlled by $\delta$ (and constrained to be small) via the associated constitutive relation, the size of the stress $\bar{\bs S}$, relative to a fixed stress scale, is neither required to be controlled by $\delta$ nor required to be small.} Our approach diverges from classical linearized elasticity where a constitutive relation is fixed and the asymptotic parameter is the size of the displacement gradient $\delta_0 = |\bs F - \bs I|$. 

As our results show, strain-limiting constitutive relations $\bs E = \bs f_\delta(\bs E, \bar{\bs S})$ are approximated by $\bs \eps = \bs f_\delta(\bs \eps, \bs \sigma)$ up to a quadratic error in $\delta$ with $\bs \sigma$ interpreted as the symmetric part of the associated first Piola-Kirchhoff stress tensor; however, a fundamental question remains. In particular, we conjecture that for a fixed external body force $\bs b$ and for all $\delta$ sufficiently small, solvability of the ``linearized" equilibrium equations, 
\begin{gather}
	\bs 0 = \Div \bs \sigma_{L,\delta} + \bs b, \quad \bs \sigma_{L,\delta}^T = \bs \sigma_{L,\delta}, \\
	\bs \eps_{L,\delta} = \bs f_\delta(\bs \eps_{L,\delta}, \bs \sigma_{L,\delta}), 
\end{gather}
implies solvability of the fully nonlinear equilibrium equations,
\begin{gather}
	\bs 0 = \mbox{Div}\, \bs S_\delta + \bs b, \quad \bs S_\delta \bs F_\delta^T = \bs F_\delta \bs S_\delta^T, \\
	\bs E_\delta = \bs f_\delta(\bs E_\delta, \bar{\bs S}_\delta), 
\end{gather}
and
\begin{gather}
\bs E_\delta = \bs \eps_{L,\delta} + \bs O(\delta^2), \quad
\bar{\bs S}_\delta = \bs \sigma_{L,\delta} + \bs O(\delta), \quad \bs S_\delta = \bs \sigma_{L,\delta} + \bs O(\delta),
\end{gather}
as $\delta \rar 0$.\footnote{Of course, on a bounded domain $\cl B$, a fixed set of boundary conditions must also be given.} Such a fundamental result would be unique but in line with analogous results established for classical linearized elasticity; see, e.g., the classic works of Stoppelli \cite{Stoppelli1954, Stoppelli1955} and the discussion of more recent results by Ciarlet in Chapter 6 of \cite{Ciarlet00Book1}.

\bibliographystyle{plain}
\bibliography{researchbibmech}

\begin{thebibliography}{10}

\bibitem{MechanicsSolidsI}
J.~F. Bell.
\newblock {\em The {E}xperimental {F}oundations of {S}olid {M}echanics}.
\newblock Mechanics of Solids, I. Springer-Verlag, Berlin, 1984.
\newblock Reprint of the 1973 original.

\bibitem{Bullfinger1729}
G.~B. Bullfinger.
\newblock De solidorum resistentia specimen.
\newblock {\em Commentari Accademiae Scientiarum}, 4:140--155, 1729.

\bibitem{Ciarlet00Book1}
P.~G. Ciarlet.
\newblock {\em Mathematical elasticity. {V}olume {I}. {T}hree-dimensional
  elasticity}, volume~84 of {\em Classics in Applied Mathematics}.
\newblock Society for Industrial and Applied Mathematics (SIAM), Philadelphia,
  PA, [2022] \copyright 2022.
\newblock Reprint of the 1988 edition [0936420].

\bibitem{Grasleyetal2015}
Z.~Grasley, R.~El-Helou, M.~D’Ambrosia, D.~Mokarem, C.~Moen, and
  K.~Rajagopal.
\newblock Model of infinitesimal nonlinear elastic response of concrete
  subjected to uniaxial compression.
\newblock {\em Journal of Engineering Mechanics}, 141(7):04015008, 2015.

\bibitem{HAO05}
Y.~L. Hao, S.~J. Li, S.~Y. Sun, C.~Y. Zheng, Q.~M. Hu, and R.~Yang.
\newblock Super-elastic titanium alloys with unstable plastic deformation.
\newblock {\em Appl. Phy. Lett.}, 87, 2005.

\bibitem{ITOUKOVRAJ21}
H.~Itou, V.~Kovtunenko, and K.~R. Rajagopal.
\newblock On an implicit model linear in both stress and strain to describe the
  response of porous solids.
\newblock {\em Math. Mech. Solids}, 144:107--118, 2021.

\bibitem{RajetalBones2024}
A.~Jeyavel, P.~Alagappan, J.~Bird, M.~Moreno, and K.~R. Rajagopal.
\newblock A new constitutive relation to describe the response of bones.
\newblock {\em Int. J. Non-Linear Mech.}, 61:104664, 2024.

\bibitem{LI2007}
T.~Li, J.~W. Morris, N.~Nagasako, S.~Kuramoto, and D.~C. Chrzan.
\newblock ``ideal'' engineering alloys.
\newblock {\em Phys. Rev. Lett.}, 98:105503, Mar 2007.

\bibitem{MAIWALTONELLIP}
T.~Mai and J.~Walton.
\newblock On strong ellipticity for implicit and strain-limiting theories of
  elasticity.
\newblock {\em Math. Mech. Solids}, 20:121--139, 2015.

\bibitem{MAIWALTONMON}
T.~Mai and J.~Walton.
\newblock On strong monotonicity for strain-limiting theories of elasticity.
\newblock {\em J. Elast.}, 120:39--65, 2015.

\bibitem{Morganetal2001}
E.~R. Morgan, O.~C. Yeh, W.~C. Chang, and T.~M. Keaveny.
\newblock Nonlinear behavior of trabecular bone at small strains.
\newblock {\em J. Biomech. Eng.}, 123:1--9, 2001.

\bibitem{Moyer75}
B.~Moyer.
\newblock Robert {H}ooke's ambiguous presentation of `{H}ooke's {L}aw'.
\newblock {\em Isis}, 68:275--288, 1975.

\bibitem{MURRAJRAJ21BI}
P.~Murru and K.~R. Rajagopal.
\newblock Stress concentration due to the bi-axial deformation of a plate with
  a porous elastic body with a hole.
\newblock {\em Z. Angew. Math. Mech.}, 2021.

\bibitem{MURRURAJ21UNI}
P.~Murru and K.~R. Rajagopal.
\newblock Stress concentration due to the presence of a hole within the context
  of elastic bodies.
\newblock {\em Material Design \& Processing Communications}, 3(5):e219, 2021.

\bibitem{D32021}
P.~Murru, C.~Torrence, Z.~Grasley, K.~R. Rajagopal, P.~Alagappan, and
  E.~Garboczi.
\newblock Density-driven damage mechanics ({D3-M}) model for concrete {I}:
  mechanical damage.
\newblock {\em Int. J. Pavement Eng.}, 23(4):1161--1174, 2022.

\bibitem{D32021II}
P.~Murru, C.~Torrence, Z.~Grasley, K.~R. Rajagopal, P.~Alagappan, and
  E.~Garboczi.
\newblock Density driven damage mechanics ({D3-M}) model for concrete {II}:
  fully coupled chemo-mechanical damage.
\newblock {\em Int. J. Pavement Eng}, 23(4):1175--1185, 2022.

\bibitem{PRUSARAJWINE22}
V.~Prusa, K.~R. Rajagopal, and A.~Wineman.
\newblock Pure bending of an elastic prismatic beam made of a material with
  density-dependent material parameters.
\newblock {\em Math. Mech. Solids}, 27:1546--1558, 2022.

\bibitem{PRUSA2020}
V.~Průša, K.R. Rajagopal, and K.~Tůma.
\newblock Gibbs free energy based representation formula within the context of
  implicit constitutive relations for elastic solids.
\newblock {\em Int. J. Non-Linear Mech.}, 121:103433, 2020.

\bibitem{Raj_Implicit03}
K.~R. Rajagopal.
\newblock On implicit constitutive theories.
\newblock {\em Appl. Math.}, 48(4):279--319, 2003.

\bibitem{RajElastElast}
K.~R. Rajagopal.
\newblock The elasticity of elasticity.
\newblock {\em Z. Angew. Math. Phys.}, 58(2):309--317, 2007.

\bibitem{Raj2010}
K.~R. Rajagopal.
\newblock On a new class of models in elasticity.
\newblock {\em Math. Comput. Appl.}, 15(4):506--528, 2010.

\bibitem{RajConspectus}
K.~R. Rajagopal.
\newblock Conspectus of concepts of elasticity.
\newblock {\em Math. Mech. Solids}, 16(5):536--562, 2011.

\bibitem{RajSmallStrain}
K.~R. Rajagopal.
\newblock Non-linear elastic bodies exhibiting limiting small strain.
\newblock {\em Math. Mech. Solids}, 16(1):122--139, 2011.

\bibitem{Raj_GumMetal14}
K.~R. Rajagopal.
\newblock On the nonlinear elastic response of bodies in the small strain
  range.
\newblock {\em Acta Mech.}, 225(6):1545--1553, 2014.

\bibitem{RAJAGOPAL2018}
K.~R. Rajagopal.
\newblock A note on the linearization of the constitutive relations of
  non-linear elastic bodies.
\newblock {\em Mech. Res. Commun.}, 93:132--137, 2018.
\newblock Mechanics from the 20th to the 21st Century: The Legacy of G{\'e}rard
  A. Maugin.

\bibitem{RAJDENSITY21}
K.~R. Rajagopal.
\newblock An implicit constitutive relation for describing the small strain
  response of porous elastic solids whose material moduli are dependent on the
  density.
\newblock {\em Math. Mech. Solids}, 26(8):1138--1146, 2021.

\bibitem{SAITOETAL03}
T.~Saito, T.~Furuta, J.~H. Hwang, S.~Kuramoto, K.~Nishino, N.~Suzuki, R.~Chen,
  A.~Yamada, K.~Ito, Y.~Seno, T.~Nonaka, H.~Ikehata, N.~Nagasako, C.~Iwamoto,
  Y.~Ikuhara, and T.~Sakuma.
\newblock Multifunctional alloys obtained via a dislocation-free plastic
  deformation mechanism.
\newblock {\em Science}, 300:464--467, 2003.

\bibitem{SAKetal04}
N.~Sakaguch, M.~Niinomi, and T.~Akahori.
\newblock Tensile deformation of ti-nb-ta-zr biomedical alloys.
\newblock {\em Mater. Trans.}, 45:1113--1119, 2004.

\bibitem{SAKAGUCHI2005}
N.~Sakaguchi, M.~Niinomi, T.~Akahori, J.~Takeda, and H.~Toda.
\newblock Effect of {T}a content on mechanical properties of
  {T}i–30{N}b–{XT}a–5{Z}r.
\newblock {\em Mater. Sci. Eng. C}, 25(3):370--376, 2005.
\newblock Selected Papers Presented at the Materials Science and Technology
  2004 Meeting: Titanium for Biomedical, Dental, and Healthcare Applications.

\bibitem{Stoppelli1954}
F.~Stoppelli.
\newblock Un teorema di esistenza ed unicit\`{a} relativo alle equazioni
  dell'elastostatica isoterma per deformazioni finite.
\newblock {\em Ricerche Mat.}, 3:247--267, 1954.

\bibitem{Stoppelli1955}
F.~Stoppelli.
\newblock Sulla sviluppabilit\`{a} in serie di potenze di un parametro delle
  soluzioni delle equazioni dell'{E}lastostatica isoterma.
\newblock {\em Ricerche Mat.}, 4:58--73, 1955.

\bibitem{TALLING2008669}
R.~J. Talling, R.~J. Dashwood, M.~Jackson, S.~Kuramoto, and D.~Dye.
\newblock Determination of (c11-c12) in ti–36nb–2ta–3zr–0.3o (wt.%)
  (gum metal).
\newblock {\em Scr. Mater.}, 59(6):669--672, 2008.

\bibitem{TruesdellNollNLFT}
C.~Truesdell and W.~Noll.
\newblock {\em The Nonlinear Field Theories of Mechanics}.
\newblock Springer-Verlag, Berlin, second edition, 1992.

\bibitem{VAJMURRAJ22}
B.~Vajipeyajula, P.~Murru, and K.~R. Rajagopal.
\newblock Stress concentration due to an elliptic hole in a porous elastic
  plate.
\newblock {\em Math. Mech. Solids}, 28:854--869, 2023.

\bibitem{Wertheim1847}
M.~Wertheim.
\newblock M\'{e}moire sur l'\'{e}lastiqu\'{e} et la coh\'{e}sion des principaux
  tissus du corps humain.
\newblock {\em Ann. Chim. Phys.}, 21:385--414, 1847.

\bibitem{WITHEY200826}
E.~Withey, M.~Jin, A.~Minor, S.~Kuramoto, D.~C. Chrzan, and J.~W. Morris.
\newblock The deformation of “gum metal” in nanoindentation.
\newblock {\em Mat. Sci. Eng. A}, 493(1):26--32, 2008.
\newblock Mechanical Behavior of Nanostructured Materials, a Symposium Held in
  Honor of Carl Koch at the TMS Annual Meeting 2007, Orlando, Florida.

\bibitem{ZHANG2009733}
S.~Q. Zhang, S.~J. Li, M.~T. Jia, Y.~L. Hao, and R.~Yang.
\newblock Fatigue properties of a multifunctional titanium alloy exhibiting
  nonlinear elastic deformation behavior.
\newblock {\em Scr. Mater.}, 60(8):733--736, 2009.

\end{thebibliography}
\bigskip

\centerline{\scshape K. R. Rajagopal}
\smallskip
{\footnotesize
	\centerline{Department of Mechanical Engineering, Texas A\&M University}
	
	\centerline{College Station, TX 77843, USA}
	
	\centerline{\email{krajagopal@tamu.edu}}
}

\bigskip

\centerline{\scshape C. Rodriguez}
\smallskip
{\footnotesize
	\centerline{Department of Mathematics, University of North Carolina}
	
	\centerline{Chapel Hill, NC 27599, USA}
	
	\centerline{\email{crodrig@email.unc.edu}}
}

\end{document}